\documentclass{vldb}
\usepackage{times}
\usepackage{graphicx}
\usepackage{balance}
\usepackage{pseudocode}
\usepackage{url}

\usepackage{times}

\newtheorem{definition}{Definition}
\newtheorem{theorem}{Theorem}

\begin{document}
\title{Distributed Data Placement via Graph Partitioning}

\numberofauthors{3}

\author{
\alignauthor
Lukasz Golab \\
\affaddr{University of Waterloo} \\
\email{lgolab@uwaterloo.ca}
\alignauthor
Marios Hadjieleftheriou \\
\affaddr{AT\&T Labs - Research} \\
\email{marioh@research.att.com}
\alignauthor
Howard Karloff \\
\affaddr{Yahoo! Labs} \\
\email{karloff@yahoo-inc.com}
\and
\alignauthor
Barna Saha \\
\affaddr{AT\&T Labs - Research} \\
\email{barna@research.att.com}
}

\maketitle

\sloppy

\begin{abstract}

With the widespread use of shared-nothing clusters of servers, there has been a proliferation of distributed object stores that offer high availability, reliability and enhanced performance for MapReduce-style workloads.  However, relational workloads 
cannot always be evaluated efficiently using MapReduce without extensive data migrations, which cause network congestion and reduced query throughput.  We study the problem of computing data placement strategies
that minimize the data communication costs incurred by typical relational
query workloads in a distributed setting.

Our main contribution is a reduction of the data placement problem to the well-studied problem of {\sc Graph Partitioning}, which is NP-Hard but for which efficient approximation algorithms exist.  
The novelty and significance of this result lie in representing
the communication cost exactly and using standard graphs instead of
hypergraphs, which were used in prior work on data 
placement that optimized for different objectives (not communication cost).

We study several practical extensions of the problem: with load balancing, with replication, with materialized views, and with complex query plans consisting of sequences of intermediate operations that may be computed on different servers.
We provide integer linear programs (IPs) that may be used with any IP solver to find an optimal data placement.  For the no-replication case, we use publicly available graph partitioning libraries (e.g., METIS) to efficiently compute nearly-optimal solutions.
For the versions with replication, we introduce two heuristics that utilize the {\sc Graph Partitioning} solution of the no-replication case.  Using the TPC-DS workload, it may take an IP solver weeks to compute an optimal data placement, whereas our reduction produces nearly-optimal solutions in seconds.
\end{abstract}

\section{Introduction}

The emergence of cloud computing has led to a proliferation of distributed storage solutions
that offer high availability, reliability, and excellent performance. For example, open-source
distributed object stores that offer these benefits include RIAK \cite{Riak}, Swift \cite{swift}, HDFS \cite{borthakur2008hdfs}, CephFS \cite{tahoe} and Quantcast QFS \cite{qfs}.
The main idea behind these systems is to replicate and spread the data uniformly
across a cluster of servers, in order to increase availability and reliability and take advantage of data declustering
(in other words, process data in parallel across multiple servers).

These systems make good use of the available cluster resources, but they are targeted towards specific query workloads.
In particular, replication and declustering favor MapReduce-style processing, or any processing that
can be easily parallelized. On the other hand, declustering hurts the performance of query workloads
that perform certain types of joins, the reason being that joins could result in a large volume of data migrations, which can
saturate the network and reduce the performance of the underlying distributed data store.  For these types of queries,
careful placement of data around the cluster is critical for guaranteeing high availability, reliability and high query throughput.

For example, in a relational data warehouse, queries include multi-way joins to combine fact tables with multiple dimension tables.  Materialized views are commonly used to pre-compute query results and must be maintained over time.  Thus, data placement in a distributed data warehouse is critical for good query performance and efficient view maintenance; the latter is particularly important in on-line streaming warehouses \cite{golab09stream}, which are continuously updated (as opposed to being taken down for refresh once a week or once a month) and must keep up with the incoming data feeds.

Recent work such as CoHadoop \cite{eltabakh11cohadoop} enables the co-location of related data, but places the burden on the user or the database administrator to determine what should be co-located and where.  In this paper, we aim to automate this process by proposing algorithms for computing nearly-optimal \emph{data placement} strategies for a given workload.

In the simplest version of the problem, we are given a set of base tables, ad-hoc queries and servers with known capacities.  The goal is to decide where to store the tables and where to evaluate the queries in order to minimize the data communication cost during query evaluation.
We also address other issues that arise in our motivating applications, such as load balancing, replication, materialized views, and complex query plans consisting of sequences of intermediate results that may be computed on different servers.

\subsection{Contributions and Roadmap}

First, not surprisingly, we show that
even the simplest formulation of our data placement problem 
is NP-Hard.

Second, and more surprisingly, we reduce the data placement problem without replication to graph partitioning.  
Previous work on distributed data placement relied on \emph{hypergraph} partitioning to represents different objectives 
such as minimizing the number of distributed transactions \cite{curino10schism,kayyoor13data}.  
As we will show, hypergraph partitioning fails to capture the data communication cost, but our reduction to graph partitioning can do so \emph{exactly}.  We consider this reduction to be the main contribution of this paper.

The reduction to graph partitioning is a desirable result.  Graph partitioning is an NP-Hard problem, but it has been studied extensively.  In particular, effective and efficient approximation algorithms and a variety of robust libraries are publicly available (e.g., METIS \cite{karypis95metis} and Chaco \cite{hendrickson94chaco}).  Some problems involving hypergraph partitioning may be approximated by graph partitioning, as was done in  \cite{curino10schism,kayyoor13data} (see Section~\ref{sec.related}).  On the other hand, by reducing our problem to graph partitioning, we can directly use tools such as METIS to provide good approximate data placement plans very efficiently in practice.
Had we needed hypergraph partitioning, we would have
had to use {\em hypergraph} partitioning software, which, by solving
a more general problem, gives worse performance.

Third, we present two classes of algorithms for our problem: non-trivial integer program (IP) formulations that compute an optimal data placement using an IP solver, and practical approximation algorithms based on our reduction to graph partitioning.  We also extend our algorithms to handle load balancing.

Fourth, we turn our attention to the version of the problem with replication
and we present two heuristics that use the reduction to graph partitioning as a subroutine.
We also present an IP formulation that can be used to compute an optimal solution.

Fifth, we present a detailed empirical evaluation of our algorithms using the TPC-DS decision support benchmark.  For many parameter settings, it may take an IP solver weeks (or
longer) to compute an optimal data placement, whereas our reduction produces nearly-optimal solutions in seconds.

The remainder of the paper is organized as follows.
Section \ref{sec.definition} formally defines the problem and notation used throughout the paper.
Section \ref{sec.related} presents related work.
Section \ref{sec.solution} presents our reduction to graph partitioning.
Section \ref{sec.experiments} presents an empirical study.
Finally, Section \ref{sec.conclusion} concludes the paper.

\section {Problem Definition}
\label{sec.definition}
To simplify the presentation, first we define a simple version of the problem with queries and base tables.
We will address extended versions, including load balancing and replication, in Section~\ref{sec.solution}.
\begin{definition}[Data Placement]
\label{def.problem}
Given
\begin{enumerate}
\item $n$ tables $T=\{T_1, T_2, \ldots, T_n\}$, the $j$th table having a nonnegative integral size $t_j$,
\vspace{-5pt}
\item $m$ queries $Q=\{Q_1, Q_2, \ldots, Q_m\}$, each referencing one or more tables, where $Q_i \subseteq T$,
\vspace{-5pt}
\item $l$ servers $S=\{S_1, S_2, \ldots,S_l\}$, the $k$th server having nonnegative integral storage capacity $s_k$, and
\vspace{-5pt}
\item for each $Q_i$ and $T_j \in Q_i$, a nonnegative integral communication cost $C_i^j$, which is the cost
incurred for transferring whichever part of table $T_j$ is needed in order to evaluate query $Q_i$ (e.g., after performing any local projections or selections),
\end{enumerate}
assign each table in $T$ to one of the $l$ servers in $S$ so as not to violate any capacities,
while minimizing the communication cost paid to process all queries.
\end{definition}

Let a {\em placement} of tables to servers be a mapping $f: T \rightarrow S$. A placement is {\em legal} if
$\sum_{j:f(T_j)=S_k}{t_j \le s_k}$ for $1 \le k \le l$, i.e., the total size of all tables placed on server $S_k$
is no greater than $s_k$. 

The cost of a legal placement is defined as follows. We allow any query to be processed
on any server. The cost of processing query $Q_i$ on server $S_k$ is the communication cost
of shipping to $S_k$ all those (required fragments of) tables in $Q_i$ which are not stored on $S_k$, i.e.,
$\sum_{T_j \in Q_i: f(T_j) \neq S_k}{C_i^j}.$ We call this query processing model \emph{query-site execution}.  Since our goal is to minimize the overall communication cost, we will
assume from now on that we choose the server on which to process $Q_i$ so as to minimize the associated
communication cost:
\begin{definition}[Query Communication Cost]
\label{def.cost}
Given a placement $f$, the communication cost of processing query $Q_i$ is
$$cost(f, Q_i)=\min_k{\left[ \sum_{j:T_j \in Q_i, f(T_j) \neq S_k}{C_i^j} \right]}.$$
\end{definition}

For example, suppose query $Q_1$ joins three tables, $T_1$, $T_2$ and $T_3$.  For simplicity, assume that $C_1^1=C_1^2=C_1^3$, i.e., $Q_1$ requires equally-sized fragments of each of the three tables.  If $T_1$ and $T_2$ are placed on $S_1$ and $T_3$ is placed on server $S_2$, then we will evaluate $Q_1$ on server $S_1$ rather than $S_2$.  The former has a communication cost of $C_1^3$ (we need to ship this fragment of $T_3$ to $S_1$) whereas the latter has a communication cost of $C_1^1 + C_1^2$ (we need to ship the $T_1$ and $T_2$ fragments to $S_2$).

\textbf{Note:} While our problem definition assumes that the objects to be stored are tables, we can easily extend it to the case of partitioned tables.  In this case, each ``table'' is actually a part of a larger table, and, for instance, a distributed hash join can be modelled as a set of smaller queries that join the corresponding parts.  How to partition the tables for a given workload is an orthogonal problem that has been addressed, e.g., in \cite{NehmeB11}.

Given that the table sizes in our problem definition are arbitrary nonnegative integers, determining
if there is {\em any} legal placement  at all is NP-Hard, since {\sc Data Placement} is at least as hard
as {\sc Partition}, even when $k=2$. (The proof is straightforward and appears in the appendix.)
Since the feasibility question is itself NP-Hard,
there can be no polynomial-time approximation algorithm with any fixed ratio (unless
P$=$NP). This means that in order to get good algorithms, assuming that P$\neq$NP, we
will have to allow, in the worst case, some overloading of the servers.

\section{Related Work}
\label{sec.related}
Given a set of tables, a set of servers and a set of queries, Kayyoor et al.\ \cite{kayyoor13data}
studied algorithms for deciding which tables to replicate and where to place those replicas.  Their objective is to minimize the average query span, 
which is the number of servers involved in answering a query. Minimizing the average query
span of a query workload is different from minimizing the communication cost for the same workload, as illustrated by the following example. Assume we have three servers and one query associated with six tables, as shown in Figure \ref{fig.kayyoor13data}.  As illustrated, tables $T_1$ through $T_4$ are larger than $T_5$ and $T_6$.  Assume for simplicity that in order to evaluate the query we need to use all tuples contained in all six tables. Two possible placements are shown in the figure, both of which give a query span of three since the tables needed by the query are spread out on three servers.  However, the communication cost of Placement 2 is more than double that of Placement 1---in both cases, it is best to execute the query on server 1, but Placement 2 additionally requires two large tables, $T_3$ and $T_4$, to be shipped there. 
Since the algorithm of
Kayyoor et al. cannot differentiate between these two cases, it cannot be used to solve our problem.
The authors do present a generalization of the algorithm that considers table sizes, but only to guarantee that the capacities of the servers are not exceeded by the recommended
placement. Table sizes play no role in deciding a placement that minimizes the communication cost of
the query workload.

\begin{figure}
\centering
\includegraphics[width=1.8in]{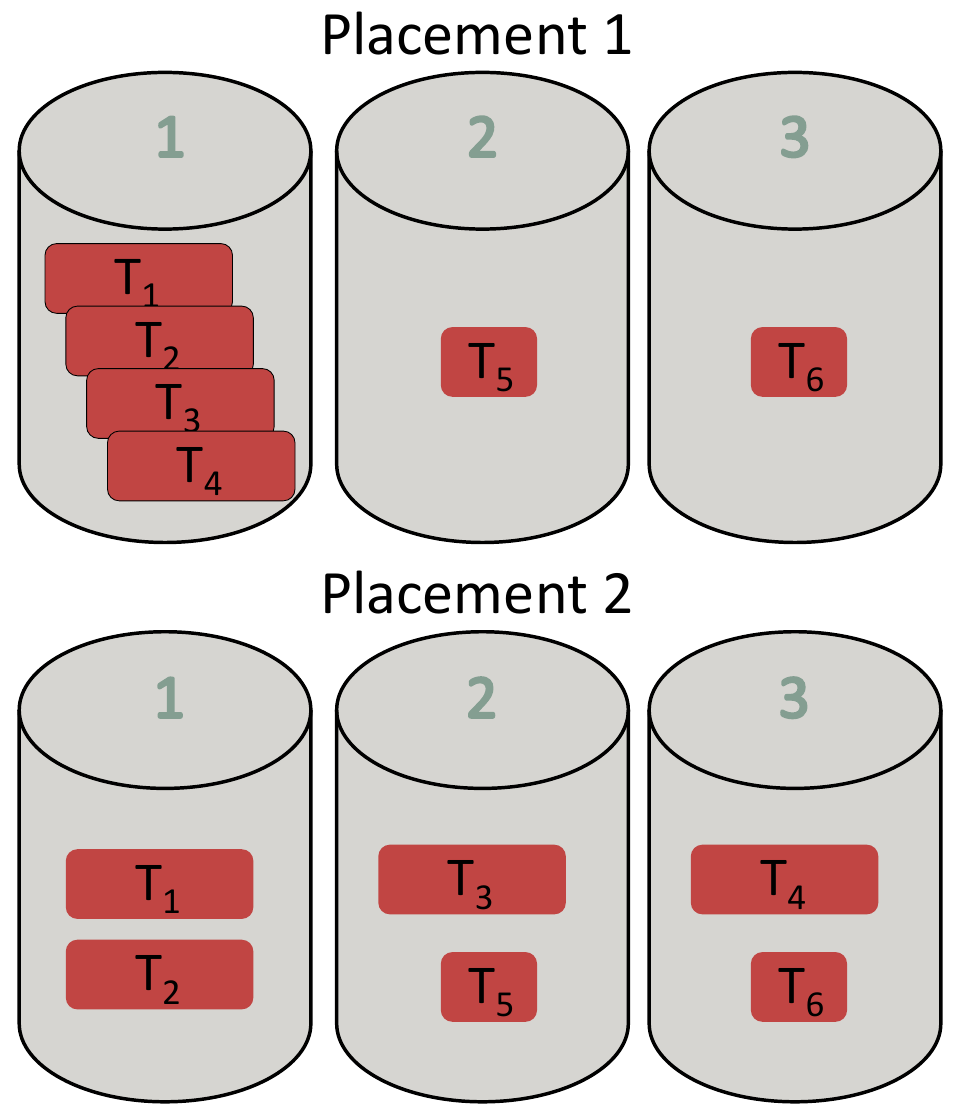}
\caption{Two placement schemes with equal query span but different communication cost, for one query over six tables.}
\label{fig.kayyoor13data}
\end{figure}

Curino et al.\ \cite{curino10schism} present Schism, which
minimizes the number of distributed transactions in a given workload by assigning and replicating
individual tuples to servers.
The database and the workload are represented as a graph, whose nodes correspond to tuples and whose edges connect tuples accessed in the same transaction.
Graph partitioning algorithms are applied to find balanced parts, each part corresponding to one server, that minimize the weight of the cut edges.
As in Kayyoor et al. \cite{kayyoor13data}, 
this technique does not differentiate between cases where the required tuples for
a transaction are distributed across two or more than two servers. Hence, it cannot distinguish
between the two data placement plans shown in Figure \ref{fig.kayyoor13data} and cannot be used to solve our problem.

In addition to database design, graph partitioning was used to solve data placement problems in parallel computing; see, e.g., \cite{HendricksonK00}.  Again, this work optimized for different objectives and cannot be adapted to solve our problem.

A natural way of thinking about the data placement problem, which was considered by both Kayyoor et al.\  \cite{kayyoor13data} and Curino et al.\ \cite{curino10schism},
is to model the workload as a hypergraph, in which tables or tuples are represented as nodes, and queries or transactions are represented as hyperedges.
Each hyperedge is a set of nodes, corresponding to the tables associated with the particular query.\footnote{When queries are associated with at most two tables each, the hypergraph representation reduces to a graph representation.
}
Interestingly, the optimization
objectives of Kayyoor et al.\ and Curino et al.\ are captured exactly by hypergraph partitioning.  For example, cutting a hyperedge means that the tuples needed by this particular transaction are placed on multiple servers; therefore, the number of cut hyperedges exactly corresponds to the number of distributed transactions.
However, since graph partitioning can be solved more accurately than the more general problem of hypergraph partitioning, Kayyoor et al.\ and Curino et al.\ provide reductions from hypergraph to graph partitioning, but these reductions are not exact.
On the other hand, hypergraph partitioning cannot capture data communication costs, because it is impossible
to assign hyperedge weights appropriately and decide how to distribute a hyperedge weight when that
edge is split across multiple servers.  Recall the example from Section~\ref{sec.definition} involving query $Q_1$.
In the hypergraph representation, $Q_1$ induces a hyperedge containing $T_1$, $T_2$ and $T_3$.
Clearly, the weight of this hyperedge, i.e., the communication cost paid by $Q_1$, depends on the placement of the
three tables and of $Q_1$ itself.  But it is the algorithm's job to determine this placement, so it is not possible to assign
an accurate edge weight a priori.
It is easy to show using adversarial counterexamples that, irrespective of how
hyperedge weights are chosen, the data communication cost obtained by hypergraph
partitioning can be arbitrarily worse than the optimal cost. 

It is surprising that while hypergraph partitioning fails to capture the data communication cost, we can provide an exact reduction of this problem to a standard graph partitioning instance. 
As we shall see in Section~\ref{sec.solution}, our graph construction is different from previous constructions which used hypergraphs, {\em even in the special case when the hyperedges all have size two, i.e., when the hypergraphs are graphs.}  Rather than building a graph with tables as nodes and queries as edges, the trick will be to build a \emph{bipartite} graph with queries on one side and tables on the other.

Finally, we note that partitioning algorithms for relational databases have been studied extensively in the past, but the focus has been on data declustering and physical design tuning in order to speed up query evaluation by taking
advantage of parallelization. Partitioning strategies include range partitioning, hash partitioning, and
partitioning based on query cost models
\cite{sanjay04integrating, zilio98physical, rao02automating, navathe89vertical, ghandeharizadeh90hubrid, liu96partitioning, NehmeB11}.
Furthermore, modern distributed storage systems, such as BigTable \cite{chang06bigtable}, are not optimized for
relational workloads on multiple tables. In addition, distributed key/value stores, such as
Amazon Dynamo \cite{dynamo}, HDFS \cite{borthakur2008hdfs}, RIAK \cite{Riak}, and Quantcast QFS \cite{qfs} focus primarily on randomly distributing redundant data (either by replication
or erasure coding)
with the main objective being to increase reliability and availability. Recently, systems like CoHadoop \cite{eltabakh11cohadoop}
have been developed to take advantage of non-random data placement in order to speed up evaluation of
certain classes of queries. Nevertheless, these systems focus only on technical issues and leave the responsibility of
choosing the placement of data to the database administrator. Our work automates this process.

\section{Our Solution}\label{sec.solution}
First, we present an \emph{exact} reduction from the {\sc Data Placement} problem in Definition~1 to {\sc Graph Partitioning} (Section~\ref{sec.reduction}). Then we present an IP formulation that can be used with any IP solver to produce an optimal solution (Section~4.1.1).
Next, we generalize our definition to include arbitrary query execution plans with materialized views and intermediate results, and we present a generalized reduction to {\sc Graph Partitioning} (Section~\ref{sec.materialized_views}).  Section~4.3 discusses load balancing.
Finally, we discuss how to handle replication in Section~4.4. We give an IP formulation for finding an optimal placement
when a fixed number of replicas per table is desired, and we discuss two heuristics that use our reduction
to find good placements efficiently.

\subsection{Reduction To Graph Partitioning}
\label{sec.reduction}

Our version of {\sc Graph Partitioning} is  defined as follows. Given a node- and edge-weighted
graph $G=(V,E)$, and a sequence of $l$ nonnegative capacities, partition $V$
into $l$ parts such that the total weight of the nodes in
the $k$th part is at most the $k$th capacity, so as to
minimize the sum of the weights of the (cut) edges whose endpoints are in different parts.

Throughout this paper, the term ``partition'' will always refer
to what is known in the literature as an {\em ordered partition},
i.e., a sequence of disjoint subsets whose union is the universe.
Equivalently, an ordered partition of $V$ into $l$ parts is just a
mapping $f:V\rightarrow \{1,2,...,l\}$.

More formally, given a graph
$G=(V,E)$ with edge $e$ having weight $w(e)$, node $v$ having
weight $w(v)$ (the distinction between nodes and edges
keeping the notation unambiguous), and a sequence
$\langle s_1, s_2, \ldots, s_l\rangle$ of $l$ nonnegative integers,
let a mapping $g: V\rightarrow \{S_1,S_2,...,S_l\}$
(which defines an ordered $l$-part partition) be {\em legal} if $\sum_{v: g(v)=S_k} w(v) \le s_k$, for
$1 \le k \le l$; then the goal is to find a legal $g$ so as to minimize
$$\sum_{1\le k<k'\le l} \left [ \sum_{u<v: \{u,v\}\in E, g(u)=S_k, g(v)=S_{k'}}  w(\{u,v\})\right ].$$

The construction of the graph used for {\sc Graph Partitioning}
to solve  {\sc Data Placement} is as follows. We construct a
\emph{bipartite} graph in which
there is a node for each query on the left side and a node for each table on the right side of the graph. There is an edge
between a query $Q_i$ and a table $T_j$ iff $T_j \in Q_i$. The weight of an edge is equal to $C_i^j$. The main benefit of this
construction comes from the fact that we use, in addition to
nodes for tables,  a separate node for each query, unlike in Kayyoor et al.\ and Curino et al.\ where
there are nodes {\em only} for tables/tuples.

The crucial benefits of our construction vis-a-vis previous papers
on data placement
are that it preserves the
optimal cost {\em exactly} and that it generates {\em graphs},
not hypergraphs.  Having hyperedges of arbitrarily large size,
hypergraphs are far more general than regular graphs (whose
edges all have size 2), making hypergraph partitioning a far more general, and therefore much
more difficult,
problem to approximately solve than graph partitioning.

An example of our construction is shown in Figure \ref{fig.bipartite}, in which the query workload consists of four queries,
involving six tables. The queries are $Q_1=\{T_1, T_4, T_5\}$, $Q_2=\{T_1, T_3, T_6\}$, $Q_3=\{T_1, T_5\}$, $Q_4=\{T_4\}$.
Tables $T_1, T_2, T_3, T_4$ have size $2$. Tables $T_5, T_6$ have size $1$, and each
server has capacity $4$.  For simplicity, in some subsequent examples we assume that $C_i^j=t_j$ for all $i,j$; however, in practice, it is usually the case that $C_i^j<t_j$ since data-reducing operations such as projections and selections can easily be done locally.

\begin{figure}
\centering
\includegraphics[width=2.5in]{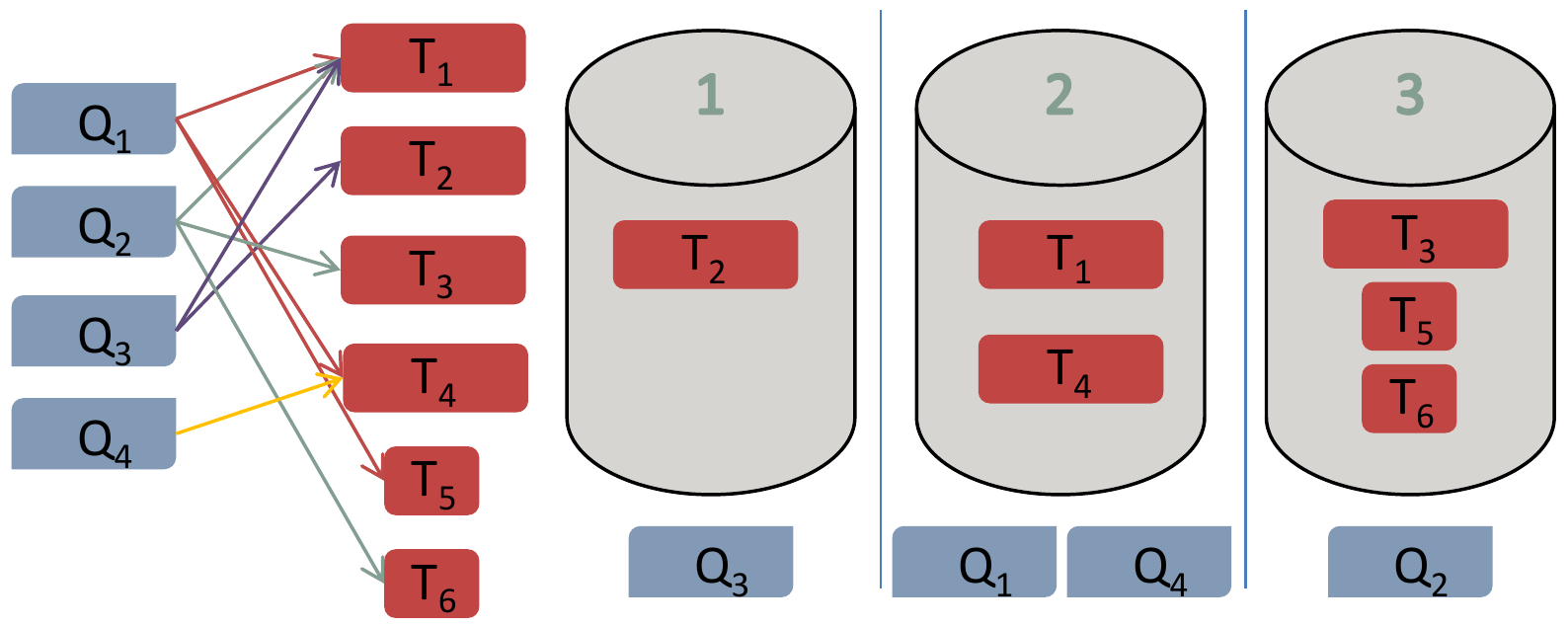}
\caption{A query workload and the corresponding bipartite graph. The partition shown corresponds to the optimal
placement of tables on three servers so as to minimize the data communication cost.
For simplicity, we assume that $C_i^j = t_j$ for all $i, j$. 
With the given placement, $Q_1$ will be executed on server $2$, $Q_2$
on server $3$, $Q_3$ on server $1$ and $Q_4$ on server $2$.}
\label{fig.bipartite}
\end{figure}

\begin{theorem}
\label{thm.data_placement}
There is a (simple) polynomial-time  transformation that takes an
instance $I$ of {\sc Data Placement} and produces an instance
$I'$ of {\sc Graph Partitioning}
such that (1) given any feasible solution to $I$, there is a
feasible solution to $I'$ of no greater cost, and
(2) given any feasible solution to $I'$, there is a feasible
solution to $I$ of no greater cost.
Furthermore, there are (trivial) polynomial-time algorithms
which convert between the specified solutions for instances $I$ and $I'$.
\end{theorem}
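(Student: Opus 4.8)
The plan is to make the construction fully explicit as a Graph Partitioning instance and then verify the two cost inequalities one query at a time. Given the Data Placement instance $I$, I would build the bipartite graph $G=(V,E)$ described above, with a node for each $Q_i$ and each $T_j$ and an edge $\{Q_i,T_j\}$ of weight $C_i^j$ whenever $T_j \in Q_i$. The two modeling choices that make everything work are to give every query node weight $0$ and every table node $T_j$ weight $t_j$, and to keep the server capacities $\langle s_1,\ldots,s_l\rangle$ unchanged. Because query nodes have zero weight, they never affect the capacity constraints, so the partition is free to place each query node wherever it likes---this is what will let the Graph Partitioning objective ``discover'' the optimal evaluation server for each query on its own.

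First I would prove direction (1). Starting from a legal placement $f:T\to S$, I extend it to a partition $g$ of all of $V$ by setting $g(T_j)=f(T_j)$ and choosing, for each query, $g(Q_i)$ to be a server attaining the minimum in Definition~\ref{def.cost}. Legality of $g$ is immediate since the table-node weights reproduce exactly the constraints on $f$ and the query nodes contribute nothing. The edges incident to $Q_i$ that are cut are precisely those $\{Q_i,T_j\}$ with $f(T_j)\neq g(Q_i)$, so by the choice of $g(Q_i)$ their total weight is $cost(f,Q_i)$; summing over $i$ shows the cut cost of $g$ equals the communication cost of $f$.

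Next I would prove direction (2), which runs almost symmetrically but yields an inequality rather than an equality. Given any legal partition $g$ of $V$, I restrict it to the table nodes, $f(T_j)=g(T_j)$; legality transfers back because the capacity constraints, restricted to the (only) weighted nodes, are exactly those of $I$. Here the query-node placements are whatever $g$ supplied rather than optimal, so for each $Q_i$ the minimization in Definition~\ref{def.cost} gives $cost(f,Q_i)\le\sum_{j:T_j\in Q_i,\,g(T_j)\neq g(Q_i)}C_i^j$, the right-hand side being the weight of $Q_i$'s cut edges. Summing over $i$ shows the communication cost of $f$ is at most the cut cost of $g$, as required. Both maps are computable in time linear in the instance size, giving the claimed polynomial-time conversions.

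Once the construction is fixed, the argument is routine, so I do not expect a genuine obstacle in the verification; the real content lies in the construction itself. The one point I would state carefully is the asymmetry between the two directions: exact equality in (1) because we place query nodes optimally, versus inequality in (2) because we inherit $g$'s placements---and it is exactly this inequality direction that is needed. The conceptual heart of the reduction is that the per-query minimization in Definition~\ref{def.cost} is supplied ``for free'' by the freedom to place zero-weight query nodes, which is also the feature that distinguishes this bipartite graph from the table-only (hyper)graphs used in prior work.
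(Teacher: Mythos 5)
Your proposal is correct and follows essentially the same route as the paper: the identical bipartite construction with zero-weight query nodes, an exact-cost extension of $f$ by optimally placing each query node for direction (1), and restriction to table nodes plus the per-query minimization inequality for direction (2) (the paper phrases this last step as a without-loss-of-generality re-optimization of the query nodes, but the content is the same).
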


\begin{proof}
Take the given instance $I$ of {\sc Data Placement} and build an
instance $I'$ of {\sc Graph Partitioning} as follows.  The graph
$G=(V,E)$ where $V=Q \dot \cup T$.  There is an edge of $E$ from $Q_i$ to
$T_j$ iff $T_j\in Q_i$;  the weight of this edge is $C_i^j$.
The weight $w(T_j)$ of $T_j \in T$ is $t_j$;  the weight $w(Q_i)$ of $Q_i \in Q$ is 0.
The $k$th capacity is $s_k$.  This completes the description of $I'$.

First, given a feasible solution $f$ for $I$, we show how to construct a feasible solution $g$ for $I'$ of no greater cost.
Given a feasible solution $f$ for $I$, we just define
$g(T_j)=f(T_j)$ for all $T_j\in T$ and
define $g(Q_i)=S_m$ where
$$m = \arg\,\max_k \sum_{j: T_j \in Q_i, f(T_j)=S_k} C_i^j.$$
We want to show first that $g$ defines a feasible solution for $I'$
whose cost is at most the cost of $f$ on $I$.

We know that for all $k$, $$\sum_{j: f(T_j)=S_k} t_j\le s_k.$$
Since $w(Q_i)=0$ for all $i$, for all $k$ we have
$$\sum_{v \in V: g(v)=S_k} w(v)= \sum_{j: g(T_j)=S_k} w(T_j)=\sum_{j: f(T_j)=S_k} t_j\le s_k;$$
therefore the partition is legal and hence $g$ defines a feasible
solution for $I'$.  The cost of the partition is
$$\sum_{k<k'} \left [\sum_{u<v : \{u,v\}\in E, g(u)=S_k, g(v)=S_{k'}} w(\{u,v\})\right ]$$
$$=\sum_{i: Q_i\in Q} \left [\sum_{j: T_j\in Q_i, g(T_j) \ne g(Q_i)} C_i^j \right ]$$
$$=\sum_{i: Q_i\in Q} \left [\sum_{j: T_j\in Q_i, f(T_j)\ne g(Q_i)} C_i^j\right ]$$
$$=\sum_{i: Q_i\in Q}\left[\sum_{j: T_j\in Q_i} C_i^j - \sum_{j: T_j\in Q_i, f(T_j)=g(Q_i)} C_i^j\right ]$$
$$=\sum_{i: Q_i \in Q} \left[ \sum_{j: T_j \in Q_i} C_i^j - \max_k \sum_{j: T_j\in Q_i, f(T_j)=S_k} C_i^j\right ]$$
$$=\sum_{i: Q_i \in Q} \min_k \left[ \sum_{j: T_j \in Q_i, f(T_j) \neq S_k} C_i^j \right]$$
$$=\sum_{i: Q_i \in Q} cost(f,Q_i)=\mbox{cost of }f\mbox{ on }I.$$

Second, given a feasible solution $g$ for $I'$, we show how to
construct a feasible solution $f$ for $I$ of no greater cost.
Given any feasible solution $g$ for $I'$, we may assume that for all $i$, $g(Q_i)= S_k$
where $k$ is chosen to maximize
$$\sum_{j: T_j\in Q_i,g(T_j)=S_k} C_i^j,$$
for modifying $g$ in order to attain this property for all $i$
cannot increase the cost.  Now we just define $f(T_j)=g(T_j)$
for all $j$.  We must show that $f$ defines a feasible solution
to $I$ whose cost is at most the cost of $g$ on $I'$.

Because $g$ is feasible for $I'$, we have, for all $k$,
$$s_k\ge \sum_{j: g(T_j)=S_k} w(T_j)=\sum_{j: g(T_j)=S_k} t_j=\sum_{j: f(T_j)=S_k} t_j;$$
therefore the placement for $I$ is legal.
The cost of solution $f$ for $I$ is
$$\sum_{i: Q_i\in Q} \left [ \sum_{j: T_j\in Q_i} C_i^j - \max_k\left [ \sum_{j: T_j\in Q_i, f(T_j)=S_k} C_i^j \right ]\right ]$$
$$=\sum_{i: Q_i\in Q} \left [ \sum_{j: T_j\in Q_i} C_i^j- \max_k\left [ \sum_{j: T_j\in Q_i, g(T_j)=S_k} C_i^j \right ]\right ]$$
$$=\mbox{cost of }g\mbox{ on }I'. \qed$$
\end{proof}

It is easy to extend the construction of the bipartite graph to handle query frequencies while computing the communication cost. Let $\nu_i \in \mathbb{N}$ denote the frequency of query $Q_i$. Given a placement $f$, the communication cost of processing $Q_i$  with frequency $\nu_i$ is
$$cost(f, Q_i)=\min_k{\left[ \sum_{j:T_j \in Q_i, f(T_j) \neq S_k}{\nu_i C_i^j} \right]}.$$
To change the construction, we need only to update the weight of the edge from query $Q_i$ to table $T_j$ from $C_i^j$ to $\nu_i C_i^j$.

\subsubsection{An Integer Program}

Building an IP for {\sc Data Placement} can be done as follows.
The IP has a boolean variable $x_{Z,S_k}$ if $Z$ is either a query
or a table and $S_k$ is a server, the meaning of which is that
$x_{Z,S_k}=1$ if query or table $Z$ is stored on server $S_k$ and 0
otherwise. For each query or table $Z$, we add the constraint
$\sum_{k}{x_{Z,S_k}} = 1$,
meaning that every query or table is assigned to exactly one server.
For each server $S_k$, we have a constraint preventing $S_k$
from being overloaded, namely, $\sum_{\mbox{query or table }Z}
z_Z\cdot x_{Z,S_k}\le s_k$,
where $z_Z$ is (a) $0$ if $Z$ is a query or (b) equal to the size of the table if $Z$ is a table. Here $s_k$ is the capacity of $S_k$.
This completes the description of constraints.

As for the objective function, it is the sum, over $i,j$ such that $T_i\in Q_j$,
of $C_i^j$ times [$1$ if $Q_i$ and $T_j$ are stored on
different servers and 0 otherwise]. More formally:
\[
\min \sum_{i:Q_i \in Q}\sum_{j: T_j \in Q_i} C_i^j \lambda_i^j,
\]
in which $\lambda_i^j = \max_{k} |x_{Q_i,S_k} - x_{T_j,S_k}|$,
which equals 0 if and only if both $Q_i$ and $T_j$ are stored on server $S_k$ and 1 otherwise, as desired.

(In reality we cannot define $\lambda_i^j$ as stated since
absolute value is nonlinear and integer linear programs must be
linear.  Instead we just require that
$\lambda_i^j$ be greater than or equal to both
$x_{Q_i,S_k}-x_{T_j,S_k}$ and
$x_{T_j,S_k}- x_{Q_i,S_k}$
for all $S_k\in S$. Then, in an
optimal solution, since the problem is a minimization,
$\lambda_i^j$ will equal $\max_{k} \max\{
x_{Q_i,S_k}-x_{T_j,S_k}, x_{T_j,S_k}-x_{Q_i,S_k}
\}=\max_{k}
|x_{Q_i,S_k}-x_{T_j,S_k}|$.)

\subsection{Arbitrary Query Execution Plans}
\label{sec.materialized_views}

Definition 1 explicitly distinguishes between base tables
and queries.  In practice, frequently-used queries are materialized to improve performance;
hence some queries can also act as tables.
Furthermore, complex queries (e.g., ones that contain several
join, aggregation, selection, and projection predicates) can be
decomposed into several steps.  
This gives rise to complex distributed query plans, which may be represented by 
directed acyclic graphs, in which the intermediate results of one step are pipelined to the next step 
(or materialized in a \emph{temporary table} and fed
to the subsequent step).  In a distributed setting,
each node of the DAG can be evaluated on a different server, where intermediate results
need to be shipped to the server(s) responsible for executing subsequent steps, but need not
be stored permanently.
Query execution plans can be determined by a query optimizer
using cost-based and rule-based optimization strategies, so as to minimize the cost of evaluating
a query as well as minimizing the amount of intermediate data produced \cite{chaudhuri98overview}.
Given arbitrary query plans,
we want to find the best data placement strategy in order to minimize the total data communication cost
required to evaluate all such plans.

Fortunately, generalizing our reduction
is simple.
The trick is to view tables, queries, intermediate nodes of query execution DAGs and
materialized views {\em all} as views.
In the new problem, any view $V_i$ can depend on any other $V_j$.
(Earlier, queries  depended only on tables,
tables depended on nothing, and nothing depended on queries.)
In fact, we will allow each view to be computed on one server
and stored possibly on a different one (however, it is easy to force each view to be computed and stored on one server if so desired).
We assume that all servers have enough
overflow capacity to temporarily store all intermediate results and query results needed.

We define the {\sc Generalized Data Placement} problem as follows.

\begin{definition} ({\sc Generalized Data Placement}) (or {\sc
GDP}).
Given
\begin{enumerate}
\item
$n$ views $\{V_1,V_2, ...,V_n\}$, the $j$th having a nonnegative
integral size $t_j$,
\vspace{-5pt}
\item
a set $S=\{S_1,S_2,..., S_l\}$ of $l$
storage servers, the $k$th having
nonnegative integral storage capacity $s_k$,
\vspace{-5pt}
\item
a nonnegative integral {\em transfer cost} $m_j$ for each view $V_j$,
\vspace{-5pt}
\item
a directed acyclic graph $K=(V,E)$ representing the union of execution plans of all views
(an arc $(V_i,V_j)$ meaning that view $V_i$ needs view $V_j$),
\vspace{-5pt}
\item
for each $i,j$ such that $(V_i,V_j)\in E$, a nonnegative integral communication
cost $C_i^j$ (which is the cost incurred for transferring
whatever part of view $V_j$ is needed in order to evaluate view
$V_i$),
\end{enumerate}
find a {\em computation server} $cs(V_j)\in \{S_1,S_2,...,S_l\}$
and a {\em storage server} $ss(V_j)\in \{S_1,S_2,...,S_l\}$ for
each $V_j$, to minimize the total communication cost, which is
defined to be $$\sum_{i,j:(V_i,V_j)\in E, ss(V_j)\ne cs(V_i)}
C_i^j+\sum_{i: cs(V_i)\ne ss(V_i)} m_i,$$
while satisfying the property that for all $k$,
$$\sum_{i: ss(V_i)=S_k} t_i\le s_k.$$
\end{definition}
The communication cost includes
processing each $V_i$ on its computation server and
transferring each $V_i$ from its computation server to its storage
server, if necessary.
Note that we are assuming, as mentioned above, that $V_i$ can be computed on its
computation server in scratch space on that server, since we are not
including the size of $V_i$  in the space needed on its computation
server.

Now we show that a simple generalization of the graph construction for the case
of tables and queries works for {\sc GDP}.
Specifically, build a node- and edge-weighted bipartite graph
$H=(V',V,F)$ where $V'=\{V'_i|V_i \in V\}$ is a copy of $V$
and $F=\{\{V_j,V'_i\}|(V_i,V_j)\in E\}\cup
\{\{V_i, V'_i\}|i=1,2,...,n\}$.
The weight of an edge
$\{V_j,V'_i\}$ with $j\ne i$ is $C_i^j$;  the weight of an edge
$\{V_i,V'_i\}$ is $m_i$.   The weight of node $V'_j$ is $0$;
the weight of $V_j$ is $t_j$.
We seek a minimum-edge-weight partition of $H$
into $k$ parts such that the total node weight of the $k$th part
is at most $s_k$.

If $V_j$ is a query, set $t_j=0$ and
$m_j=\infty$ to represent the fact that queries are not stored
(effectively, they can be stored for free anywhere).
If $V_j$ is a base table, set $m_j=\infty$ to force the graph partitioning
algorithm to not cut the edge $(V_j,V_j')$ and hence to have $ss(V_j)=cs(V_j)$. If $V_j$ is a materialized view (i.e.,
a node with both in-arcs and out-arcs which is to be stored),
set $m_j=t_j$ to represent the size of the computed view. 
If $V_j$ is an intermediate result (i.e., a node with both in-arcs and
out-arcs which is to be computed but not stored permanently),
set $t_j=0$ but let $m_j$ equal the size of the computed intermediate query result.

This construction is a generalization of the previous graph
construction for for {\sc Data Placement} in the sense that when every view is a table (i.e.,
depends on nothing) or a query (i.e., upon which nothing
depends), in $H$, nodes $Q_i$ on the left side of the bipartite
graph and nodes $T'_j$ on the right side are adjacent (only) to $Q'_i$ and $T_j$, respectively, by
infinite-weight edges, and hence in any finite-edge-cost
partition will be combined together.  Once those
nodes are combined, $Q_i$ with $Q'_i$ and $T_j$ with $T'_j$, what remains
is exactly the weighted graph constructed earlier.

\subsubsection{An Example}
Here we give a concrete example to show how one would use the
reduction specified above to solve an instance of {\sc GDP}.
Suppose there are seven views $V_1,...,V_7$, and that the
DAG $K$ is as shown in Figure \ref{fig.GDP_example}.
Nodes $V_1,V_2,V_3$ are base tables;  node $V_7$ is a query;
node $V_4$ is an intermediate result;
and nodes $V_5$ and $V_6$ are materialized views.
Because $V_7$ and $V_4$ will not be stored, we set
$t_7=t_4=0$.   For the other nodes, we use the sizes of the
views, e.g., $t_1=8,t_2=5, t_3=4, t_5=10,t_6=7$.

Now we discuss the $m_j$'s.  Since a table should not be
moved, we set $m_j=\infty$ for $V_1,V_2,V_3$.  Because an intermediate result is not stored or moved,
we set $m_j=\infty$ also for $V_4$.  The remaining $m_j$'s are the costs
of transferring the views, hence $m_j = t_j$.

For simplicity, let us set $C_i^j$ to be the size of view $V_j$.    In other words, whenever
view $V_i$ depends on view $V_j$, we must transfer all of $V_j$
from the storage server of $V_j$ to the computation server of $V_i$.
Unless $t_j=0$, which signifies that $V_j$ is not to be stored,
$C_i^j $ will just equal $t_j$.  This is the case for
$j=1,2,3,5,6$.
In our example, $t_4=t_7=0$.
Let us take $C_i^4=8$ for all $i$.  We do not need $C_i^7$.

Suppose there are $k=2$  servers $S_1,S_2$ with capacities $s_1=s_2=18$.
We now build a bipartite graph with left-hand-side nodes $V_1,V_2,...,V_7$ and
right-hand-side nodes $V'_1, V'_2, ...,V'_7$ and the following edges:
\begin{itemize}
\item
$\{V_1,V'_1\}$,
$\{V_2,V'_2\}$,
$\{V_3,V'_3\}$, and
$\{V_4,V'_4\}$, all of weight $\infty$, $\{V_5,V'_5\}$ of
weight 10, $\{V_6,V'_6\}$ of weight 7, and $\{V_7,V'_7\}$ of
weight 0;
\item
$\{V_1,V'_4\}$ of weight 8, $\{V_2,V'_4\}$ of weight 5,
$\{V_1,V'_5\}$ of weight 8,
$\{V_3,V'_5\}$ of weight 4,
$\{V_2,V'_6\}$ of weight 5,
$\{V_3,V'_6\}$ of weight 4,
$\{V_4,V'_7\}$ of weight 8, $\{V_5,V'_7\}$ of weight 10, and
$\{V_6,V'_7\}$ of weight 7.
\end{itemize}
A left-hand-side node $V_j$ has weight $t_j$;  a right-hand-side node $V'_j$ has weight 0.
A picture of this bipartite graph appears in Figure \ref{fig.GDP_example_2}.

\begin{figure}[t]
\centering
\includegraphics[width=2.5in]{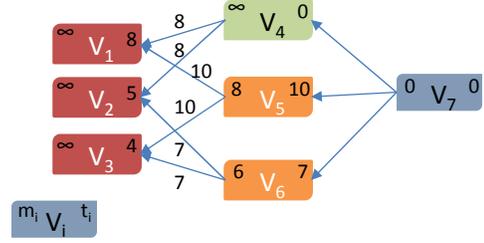}
\caption{Query execution DAG.}
\label{fig.GDP_example}
\end{figure}

\begin{figure}[t]
\centering
\includegraphics[width=1.8in]{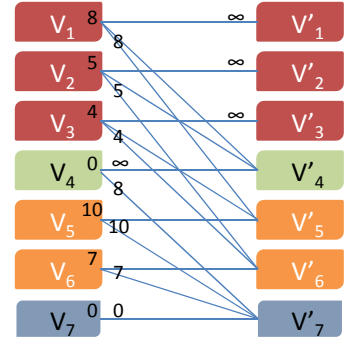}
\caption{Bipartite graph corresponding to Figure \ref{fig.GDP_example}.}
\label{fig.GDP_example_2}
\end{figure}

The goal is to find a partition of the vertex set into $k=2$
parts each of total node weight at most 18, so as to minimize
the cost of the cut edges.  For example, one feasible solution
is to put $V_2,V_3,V_6$ and $V'_2, V'_3, V'_5$ on server $S_1$ and
the remaining nodes ($V_1,V_4,V_5,V_7,V'_1,V'_4,V'_6,V'_7$) on server $S_2$. In other words,
$V_2,V_3$ are computed and stored on server $S_1$;
$V_1,V_4,V_7$ are computed and stored on server $S_2$;  $V_5$ is
computed on server $S_1$  and stored on server $S_2$; and $V_6$ is
computed on server $S_2$ and stored on server $S_1$.
The edges cut by this partition are:
\begin{itemize}
\item
$\{V_1,V'_5\}$, of cost 8, representing the cost of moving
$V_1$ from its storage server $S_2$ to $V_5$'s computation server
$S_1$;
\item
$\{V_2,V'_4\}$, of cost 5, representing the cost of moving $V_2$
from its storage server $S_1$ to $V_4$'s computation server $S_2$;
\item
$\{V_2,V'_6\}$, of cost 5, representing the cost of moving $V_2$
from its storage server $S_1$ to $V_6$'s computation server $S_2$;
\item
$\{V_3,V'_6\}$, of cost 4, representing the cost of moving $V_3$
from its storage server $S_1$ to $V_6$'s computation server $S_2$;
\item $\{V_5,V'_5\}$, of cost 10,
representing the cost of 8 to move $V_5$ from its computation
server $S_1$ to its storage server $S_2$;
\item
$\{V_6,V'_6\}$, of cost 7, representing the cost of 6 to move
$V_6$ from its computation server $S_2$ to its storage server $S_1$;
and
\item
$\{V_6,V'_7\}$, of cost 7, representing the cost of moving $V_6$
from its storage server $S_1$ to $V_7$'s computation server $S_2$.
\end{itemize}

\subsubsection{Proof of Correctness}

Now we state and prove the key property of the new graph
construction.  Within the theorem and its proof, a  ``partition'' refers to an
(ordered) partition into $l$ parts whose $k$th part has node weight
at most $s_k$.
Below the theorem and its proof, we give an example of how the
theorem would be used.

\begin{theorem}
\label{theorem:gdp}
(1) Given any feasible solution
to {\sc GDP}, there is a partition of $H$
of edge cost no greater than the cost of the feasible solution,
and (2)
given any partition of $H$, there is a feasible solution to {\sc
GDP} of cost at most the edge cost of the
partition.
In particular, the optimal value of {\sc
GDP} equals the minimum edge cost of a
partition of $H$.
\end{theorem}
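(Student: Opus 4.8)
The plan is to mirror the proof of Theorem~\ref{thm.data_placement}, but the argument will in fact be cleaner: in {\sc GDP} both the computation and storage servers are already part of the solution, so no auxiliary optimization (the $\arg\max$ step used in the earlier proof) is needed. The heart of the proof is a cost-preserving dictionary between feasible {\sc GDP} solutions $(cs,ss)$ and legal partitions $g$ of $H$. I would read each left node $V_j$ as the \emph{storage} location, setting $g(V_j)=ss(V_j)$, and each right node $V'_i$ as the \emph{computation} location, setting $g(V'_i)=cs(V_i)$. The first thing I would establish is that under this dictionary the partition is legal if and only if the {\sc GDP} capacity constraints hold, which is immediate: only the left nodes carry weight (namely $w(V_j)=t_j$, while $w(V'_j)=0$), so the total node weight placed on $S_k$ is exactly $\sum_{j:ss(V_j)=S_k} t_j$, matching the constraint $\sum_{i:ss(V_i)=S_k} t_i \le s_k$.

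Next I would show that the cut cost of $g$ equals the {\sc GDP} objective term by term, by splitting $F$ into its two families of edges. An edge $\{V_i,V'_i\}$ of weight $m_i$ is cut exactly when $g(V_i)\ne g(V'_i)$, i.e.\ when $ss(V_i)\ne cs(V_i)$, so these edges contribute precisely $\sum_{i: cs(V_i)\ne ss(V_i)} m_i$. An edge $\{V_j,V'_i\}$ of weight $C_i^j$, which arises from an arc $(V_i,V_j)\in E$, is cut exactly when $g(V_j)\ne g(V'_i)$, i.e.\ when $ss(V_j)\ne cs(V_i)$, so these edges contribute precisely $\sum_{i,j:(V_i,V_j)\in E,\, ss(V_j)\ne cs(V_i)} C_i^j$. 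Summing the two families reproduces the {\sc GDP} cost exactly. The only bookkeeping point requiring care is the orientation: the edge induced by arc $(V_i,V_j)$ must join the \emph{storage} node $V_j$ of the depended-upon view to the \emph{computation} node $V'_i$ of the dependent view, which is exactly how $F$ is defined.

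With this dictionary in hand, both directions follow. For part~(1), given a feasible {\sc GDP} solution I define $g$ as above; it is legal and has cut cost equal to (hence no greater than) the {\sc GDP} cost. For part~(2), given a legal partition $g$ I read off $ss(V_j)=g(V_j)$ and $cs(V_i)=g(V'_i)$; the capacity constraints hold and the {\sc GDP} cost equals the cut cost. Since the correspondence is cost-preserving in both directions, the two inequalities combine to give the ``in particular'' claim that the optimum of {\sc GDP} equals the minimum edge cost of a partition of $H$. I expect the main obstacle to be one of careful accounting rather than genuine difficulty: verifying the left/right-to-storage/computation identification against the definition of $F$, and checking that infinite transfer costs $m_j=\infty$ (used for base tables and intermediate results) behave consistently, since an infinite-weight edge $\{V_j,V'_j\}$ forbids $ss(V_j)\ne cs(V_j)$ in any finite-cost partition exactly as an infinite $m_j$ forbids it in any finite-cost {\sc GDP} solution.
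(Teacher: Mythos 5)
Your proposal is correct and follows essentially the same route as the paper: the paper's proof defines the partition $P_k=\{V_j\mid ss(V_j)=S_k\}\cup\{V'_i\mid cs(V_i)=S_k\}$, which is exactly your dictionary $g(V_j)=ss(V_j)$, $g(V'_i)=cs(V_i)$, and then verifies legality via the zero weights on the $V'_i$ nodes and matches the cut cost to the {\sc GDP} objective term by term, just as you do. Your additional remarks on edge orientation and the role of infinite $m_j$ are consistent with, though more explicit than, the paper's treatment.
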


\proof
(1) Take any feasible solution to {\sc GDP}.
For all $k$, we have $$\sum_{i: ss(V_i)=S_k} t_i
\le s_k;$$
the cost of this solution is $$\sum_{i,j:(V_i,V_j)\in E, ss(V_j)\ne
cs(V_i)} C_i^j+\sum_{i: cs(V_i)\ne ss(V_i)} m_i.$$

Define the natural partition $(P_1,P_2,...,P_l)$ of $V\cup V'=V(H)$ by
$P_k=\{V_j|ss(V_j)=S_k\}\cup \{V'_i|cs(V_i)=S_k\}$.
The node weight of $P_k$ is $\sum_{i:V_i\in P_k} t_i=
\sum_{i: ss(V_i)=S_k} t_i$, which we know is at most $s_k$.

Let $part(v)$, for $v\in V\cup V'$, denote $P_k$ such that $v\in
P_k$.
The edge cost of the graph partition is
$$\sum_{i,j: (V_i,V_j)\in E, part(V_j)\ne part(V'_i)}
C_i^j+ \sum_{i: part(V'_i)\ne part(V_i)} m_i$$
$$=\sum_{
i,j:(V_i,V_j)\in E, ss(V_j)\ne cs(V_i)} C_i^j+\sum_{i:
cs(V_i)\ne ss(V_i)} m_i,$$
which is exactly the cost of the {\sc
GDP} solution, so the proof of (1) is complete.

For (2), take any partition $(P_1,P_2,...,P_l)$ of $H$.
We have, for all $k$, $\sum_{j: V_j\in P_k} t_j\le s_k$.  Now
define a solution to {\sc GDP}:
for all $j$, let $ss(V_j)=S_k$, where $V_j\in P_k$, and let
$cs(V_j)=S_{k'}$, where $V'_j\in P_{k'}$.
We will prove that this solution is feasible  for {\sc GDP} and
that its cost equals the edge cost of the graph partition.

We have
$\sum_{j: ss(V_j)=S_k} t_j\le s_k$.
Furthermore, the cost of the defined solution to {\sc GDP}
is $$\sum_{i,j: (V_i,V_j)\in E, ss(V_j)\ne cs(V_i)} C_i^j
+\sum_{i: cs(V_i)\ne ss(V_i)} m_i,$$
which equals
$$\sum_{i,j: (V_i,V_j)\in E, part(V_j)\ne part(V'_i)} C_i^j,
+\sum_{i: part(V'_i)\ne part(V_i)} m_i.$$
This is exactly the edge cost of the graph partition. \qed

Note that for any $V_i$, we can ensure that $ss(V_i)=cs(V'_i)$ by setting $m_i=\infty$.
This ensures that in any finite-edge-cost partition, the edge $\{V_i,V'_i\}$ is not cut, which implies that $ss(V_i)=cs(V'_i)$. Therefore,
if required, materialized views can be required to be computed and stored on the same server.

The integer program for GDP is similar to that for {\sc Data Placement} and is omitted due to space constraints.

\subsection{Load Balancing}
The {\sc Graph Partitioning} algorithm places data optimally across servers but does not
guarantee that queries are also evenly distributed across servers. It is plausible for an
optimal solution to assign the execution of a large percentage of queries to only a small
fraction of servers. Fortunately, there is a straightforward way of distributing
load across servers. 

The idea is to assign to each node of the bipartite graph a
2-dimensional weight $(w_i, l_i)$ and to assign each server a
2-dimensional capacity.
(METIS allows 2-dimensional weights and 2-dimensional capacities, which must
be satisfied componentwise). 
As before, $w_i=t_i$ is the size of the node.
Furthermore, $l_i$ 
corresponds to the estimated execution cost
for all nodes
on the left-hand side of the graph that corresponds to queries, materialized views, temporary
tables and intermediate results, and $0$ for the rest of the nodes in the graph.
Each server is assigned a 2-dimensional capacity
$(s_j, r_j)$, in which $s_j$ is
the total available storage of server $S_j$ as before, and $r_j$ corresponds to the available execution capacity.
Then, once again we can use existing graph
partitioning algorithms and libraries (e.g., METIS \cite{karypis95metis}) to partition the
graph, by optimizing both storage and load balancing constraints simultaneously

\subsection{Replication}
\label{sec.replication}
Replication is important in many application settings for fault tolerance and load balancing.
Existing systems consider replication of tables in which a predetermined number of
replicas for each table is desired (e.g., the common three-replica scheme used in modern
distributed object stores). We propose two heuristics along these lines.
Both heuristics make the assumption that the capacities of all servers $S_1, \ldots, S_l$ are equal to $s$.

Given that {\sc Data Placement} is NP-Hard, it is easy to show that
{\sc Data Placement With Replication} is also NP-Hard. The problem can be modeled by
an integer linear program whose formulation is nontrivial and is described in the appendix.
Unfortunately, an optimal solution is very expensive to compute, even for small instances
of the problem, as integer linear programs take exponential time in the worst case.
In this section we propose two heuristics for computing good data placements
with replication, that utilize the graph partitioning algorithm presented earlier. Formulating
a direct reduction to graph partitioning is left for future work.
For simplicity we focus on the case in which no views or intermediate results are present.
Extensions are straightforward.

For the first heuristic---Algorithm \ref{alg.heuristic1}---we run {\sc Data Placement}
once to get a placement without replication, assuming each
server's capacity is only $\lfloor s/r\rfloor$, where $r$ is the desired replication factor. We assume this capacity is enough to
store all the tables.
We then apply this placement strategy $r$ times to $r$ \emph{random permutations} of the server set.
Notice that this algorithm might result in some
tables' being stored multiple times on the same server (of course, in that case we only keep
a single copy). Clearly, this algorithm has the same complexity as {\sc Data Placement}, ignoring the time to
generate the random permutations, but can
only guarantee at most $r$ (and not exactly $r$) replicas of each table.

\begin{pseudocode}[plain]{Heuristic 1}{T, Q, S, r}
\label{alg.heuristic1}
\text{Assignment } A = \text{{\sc Data Placement}}(T, Q, S) \\
    \FOR p \GETS 1 \TO r \DO
    \BEGIN
        \text{Permutation } P = \text{Permute}(S) \\
		\text{Assign } A \text{ to } P. \\
	\END \\
\end{pseudocode}

In the second heuristic, instead of dividing the space on all servers by $r$ and filling a fraction of each server in every
replication round, we allocate $a = \lfloor l/r \rfloor$ servers completely for each round ($l$ is the total number of servers),
and we use these servers to compute a data placement that optimizes the communication cost for a subset of
$\lfloor m/r \rfloor$ queries ($m$ is the total number of queries).\footnote{The last round may have more servers and
more queries to optimize for.}  We assume that one replica of all tables can fit in $a$ servers (so that $r$ replicas can fit in $l$ servers).
For any nontrivial assignment, we have $r < l$, as otherwise all tables will be placed on a single server resulting in $0$ communication cost.

Algorithm~\ref{alg.heuristic2} implements the second heuristic.  In the first round, we run {\sc Data Placement} using only the first $a$ servers.
We then remove from the query set $Q$ the cheapest $\lfloor m/r \rfloor$ queries, denoted $Q^1$.  These cheapest queries
(i.e., those with the lowest communication cost) will not be considered in the subsequent replication rounds.
In the second round, we run {\sc Data Placement} using the next $a$ servers for the query set $Q \setminus Q^1$.
Then, again, we remove the cheapest $\lfloor m/r \rfloor$ queries from the remaining queries, and so on.  In the algorithm below,
$a_{i-1}=(i-1)\lfloor\frac{l}{r}\rfloor$ for $i=1,2,...,r-1$ and $a_r=l$.

\begin{pseudocode}[plain]{Heuristic 2}{T, Q, S, r}
\label{alg.heuristic2}
\FOR i \GETS 1 \TO r \DO
\BEGIN
 P_i = \{S_{1 + a_{i-1}}, S_{a_i}\} \\
\text{Assignment } A = \text{{\sc Data Placement}}(T, Q, P_i) \\
\text{Assign } A \text{ to } P_i\\
Q = Q \setminus \{ \text{the cheapest} \lfloor m/r \rfloor \text{queries remaining in } Q \}\\
\END \\
\end{pseudocode}

Note that we use the full set of tables $T$ in each replication round, meaning that every table will be replicated exactly $r$ times.  The intuition behind this heuristic is that any optimal solution would in fact partition the queries into $r$ parts, such that the queries in each part use one of the replicas of the tables, and do so optimally.  We are essentially trying to approximate an optimal solution by assuming that the $\lfloor m/r \rfloor$ cheapest queries in each round will be served by the replicas placed in this round (which is why the cheapest queries are removed from consideration after each round).  Interestingly, when the algorithm terminates and returns the placement of all the replicas of each table, a query will be executed on a server which results in the minimum communication cost; therefore, it is not necessarily true that any query $q \in Q_i$ be executed using the $i$th copy of the tables placed in the $i$th round.

\section{Experimental Results} \label{sec.experiments}
For our experimental evaluation, we implemented the proposed integer programs in AMPL and solved them optimally using CPLEX.  We also implemented an approximation algorithm that converts the given workload into a bipartite graph using our proposed reductions and runs METIS on the resulting 
graph.\footnote{For METIS parameters, we use $k$-way partitioning, 1000 cuts, and 1000 iterations. We try  $\approx$ 10 u-factors
that correspond to various maximum part size objectives and take the best result that satisfies the part size objectives (i.e., server capacity constraints).  Since METIS is fast,
the NP-hardness of feasibility implies that 
METIS (like any
other polynomial-time algorithm used for {\sc Data Placement}), must sometimes violate the part-size objectives
(unless P$=$NP).}
All experiments were run on a server with 32 Intel IA-64 1.5MHz
cores and 256GB of main memory, running SUSE Linux 2.6.16.

Many of our experiments involving CPLEX had not finished
running after more than three weeks, so we report the data communication cost of the solution
that CPLEX had converged to at that time.
On the other hand, METIS took no more than four minutes in the worst case, to report a good placement. For very few servers (i.e., 2 to 3) CPLEX finds an optimal solution quickly and is therefore preferable to
METIS (even in these cases METIS is very close to the CPLEX optimal).  Hence we do not consider small numbers of servers here.

As the input dataset we use the TPC-DS decision support benchmark \cite{tpc-ds}
(we only need the schema, the queries, and the table sizes,
but we do not need to actually generate any data).
The benchmark contains seven fact tables and 17 dimension
tables, as well as 99 predefined queries. 
The minimum, maximum and average number of table dependencies
per query are 1, 13, and 4, respectively. In other words, the queries are fairly well distributed across the spectrum
from simple to very complex.
As for the table sizes, we set the fact tables to be one to two orders of magnitude larger than the dimension tables and normalized the sizes.
In particular, fact tables have sizes between 50 and 100, and dimension tables have sizes between 1 and 10.

First we test our simple bipartite graph construction with tables and queries only. Only the base
tables need to be stored. When the queries are executed, they are executed at the server specified by the
placement given by the algorithm and the results are then discarded. The total communication cost for
executing a workload is the sum of the costs for executing each query, and the cost of executing a query
is the sum of the sizes of the tables that need to be transferred to the server executing the query. For
simplicity, we assume here that when a table is involved in a particular query,
that the whole table needs to be transferred, if it is not already co-located with the query.
Next, we randomly assign sizes to all queries (between 1 and 20) and assume that all 99 queries are to be
materialized. Then we test the bipartite graph construction for arbitrary execution plans. We also perform experiments to show the effect of
load balancing on the data communication cost. Finally, we
run experiments for testing the replication heuristics.

Given that the total numbers of tables and queries in TPC-DS are fixed, we created very large randomized datasets
for testing the scalability of our solution.  These randomized datasets range from 1000 tables and 1000
queries up to 16000 tables and 16000 queries. We choose the table sizes randomly from a normal
distribution with mean 10 and standard deviation 15 (ignoring negative values and taking the floor of each generated value).
We similarly choose the number of tables per query from
a normal distribution with mean 5 and standard deviation 3. Finally, the tables involved in each query are chosen
uniformly at random.

In our experiments we vary the number of servers from 4 up to 16, and we also vary the capacity of each
server. Notice that for a placement to be feasible, there is a minimum required capacity per server, that
depends both on the total size of all tables and on the largest table.  Given our normalized table sizes
described above, the TPC-DS dataset (with no materialized queries)
has a total size of 580 units of space and there exist three tables with size 100 and four tables with size at least 50.
Hence, with 4 servers, if each server had capacity $580/4=145$, there would not exist any feasible
placement, given that after placing all tables with size 100, there is not enough space left to place all
tables with size 50. In this case, we assign each server a capacity of 150. Similarly, assuming a total
of 16 servers, each server must have capacity $\lceil 580/16\rceil =37$, but clearly now the
largest table cannot fit in a single server. Thus, our algorithms assume that the capacity of
every server is at least 100. In this case the servers are over-provisioned, which means that many
servers might be left under-utilized. In other words, adding more servers does not necessarily
result in a more distributed solution. In fact, given this particular dataset, for 8 or more servers the
optimal data placement is always the same and only uses only 8 servers. Determining the optimal
number of servers, server capacity, horizontal partitioning of tables, etc., is dataset and application dependent,
and beyond the scope of this paper. Thus we do not consider it any further.

\subsection{Simple Query Execution Plans (24 Base Tables and 99 Queries)}

Figures \ref{fig.tpc-ds2_noviews_4} and \ref{fig.tpc-ds2_noviews_8}
show the total cost of each placement given by the optimal algorithm (using CPLEX) and the approximate algorithm (using METIS),
for 4 and 8 servers, respectively, (results with more than 8 servers are identical to the 8 server case) and varying server capacities.
In the graphs we also plot the communication cost obtained by allowing METIS to use 10 units of space more
capacity per server than CPLEX (denoted METIS+10) for easier comparison. We can clearly
see from the figures that the cost of the approximate solution is
very close to that of the best solution which CPLEX had achieved
within three weeks (the only runs which completed
within three weeks were those with 4 servers), for a
minuscule fraction of the run time. METIS took under 3 seconds to report a placement, in the worst case. In the
graphs we also plot the minimum, maximum and average part size
assigned to each server (i.e., the total size of the tables assigned to each server).
We can see that for a large number of servers, many servers remain empty, as expected.
Another interesting observation is that by allowing METIS a modest extra 10 units of space
per server over CPLEX, it can consistently ``beat'' CPLEX for all numbers of servers.

\begin{figure}[t]
\centering
\includegraphics[width=3in]{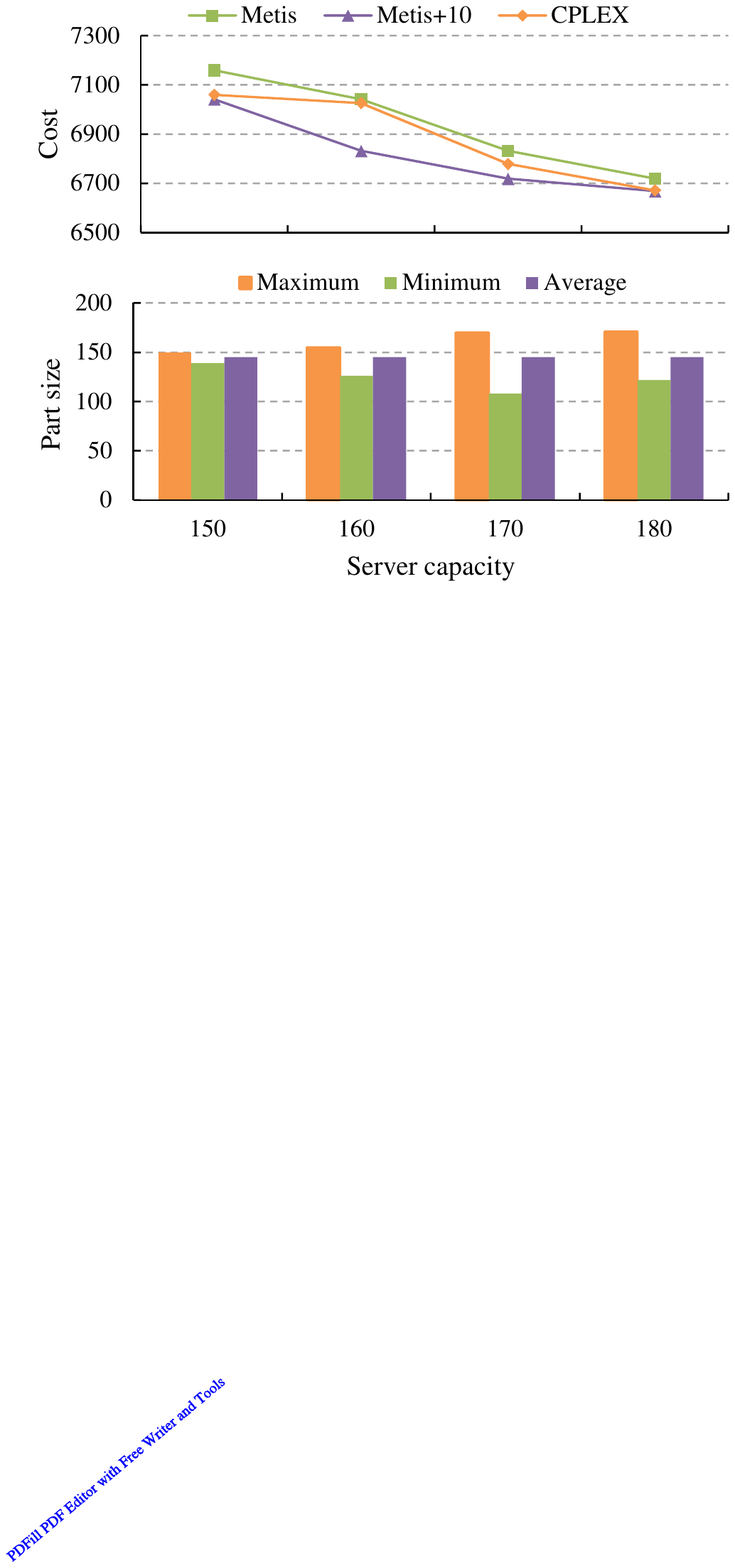}
\caption{No materialized queries. Four servers. Data communication cost.}
\label{fig.tpc-ds2_noviews_4}
\end{figure}

\begin{figure}[t]
\centering
\includegraphics[width=3in]{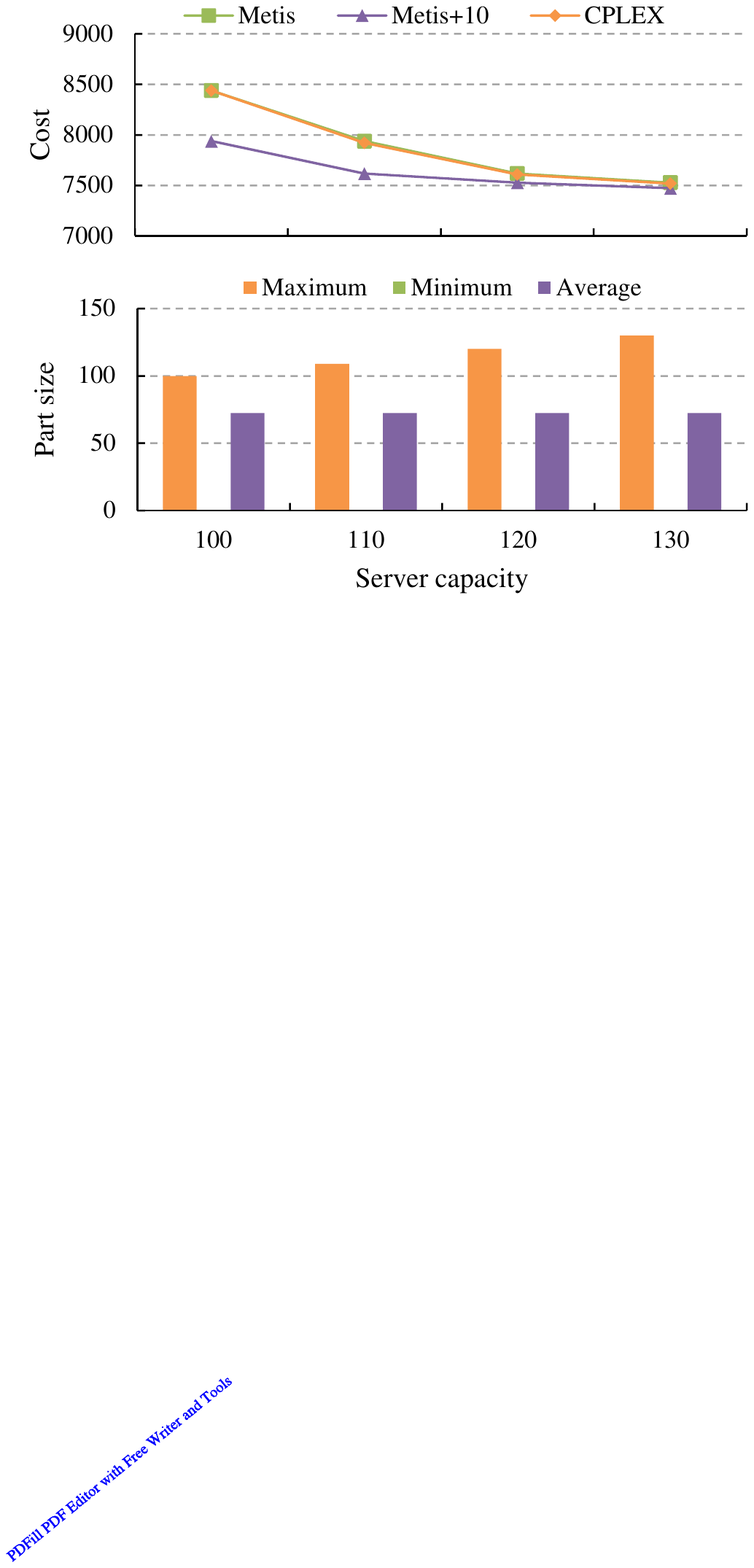}
\caption{No materialized queries. Eight servers. Data communication cost.}
\label{fig.tpc-ds2_noviews_8}
\end{figure}

\subsection{Arbitrary Query Execution Plans (All 99 Queries Materialized)}

In order to test the bipartite graph construction for arbitrary execution plans, we randomly assign
sizes to each one of the $99$ queries in TPC-DS and materialize them. The total table plus materialized view size
of this dataset is 965. Figures \ref{fig.tpc-ds2_views_onemove_4}, \ref{fig.tpc-ds2_views_onemove_8},
and \ref{fig.tpc-ds2_views_onemove_16} show the results assuming 4, 8 and 16 servers, respectively.
Once again, METIS produces very good placements as the number of servers increases.
Furthermore, for 4 servers, METIS is not able to give particularly good solutions.
But even in this case it took CPLEX 220 minutes to find an optimal solution, while METIS
reported the placement in under 3 seconds.

\begin{figure}[t]
\centering
\includegraphics[width=3in]{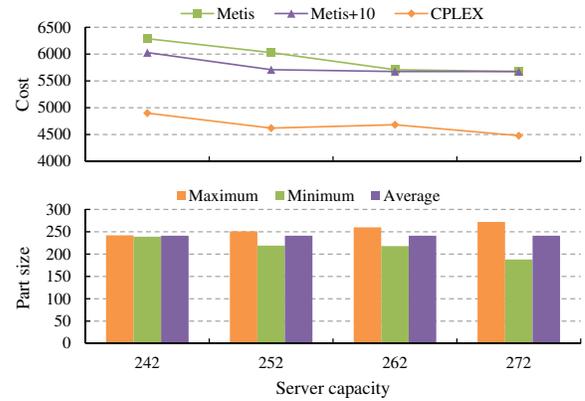}
\caption{Materialized queries. Four servers. Data communication cost.}
\label{fig.tpc-ds2_views_onemove_4}
\end{figure}

\begin{figure}[t]
\centering
\includegraphics[width=3in]{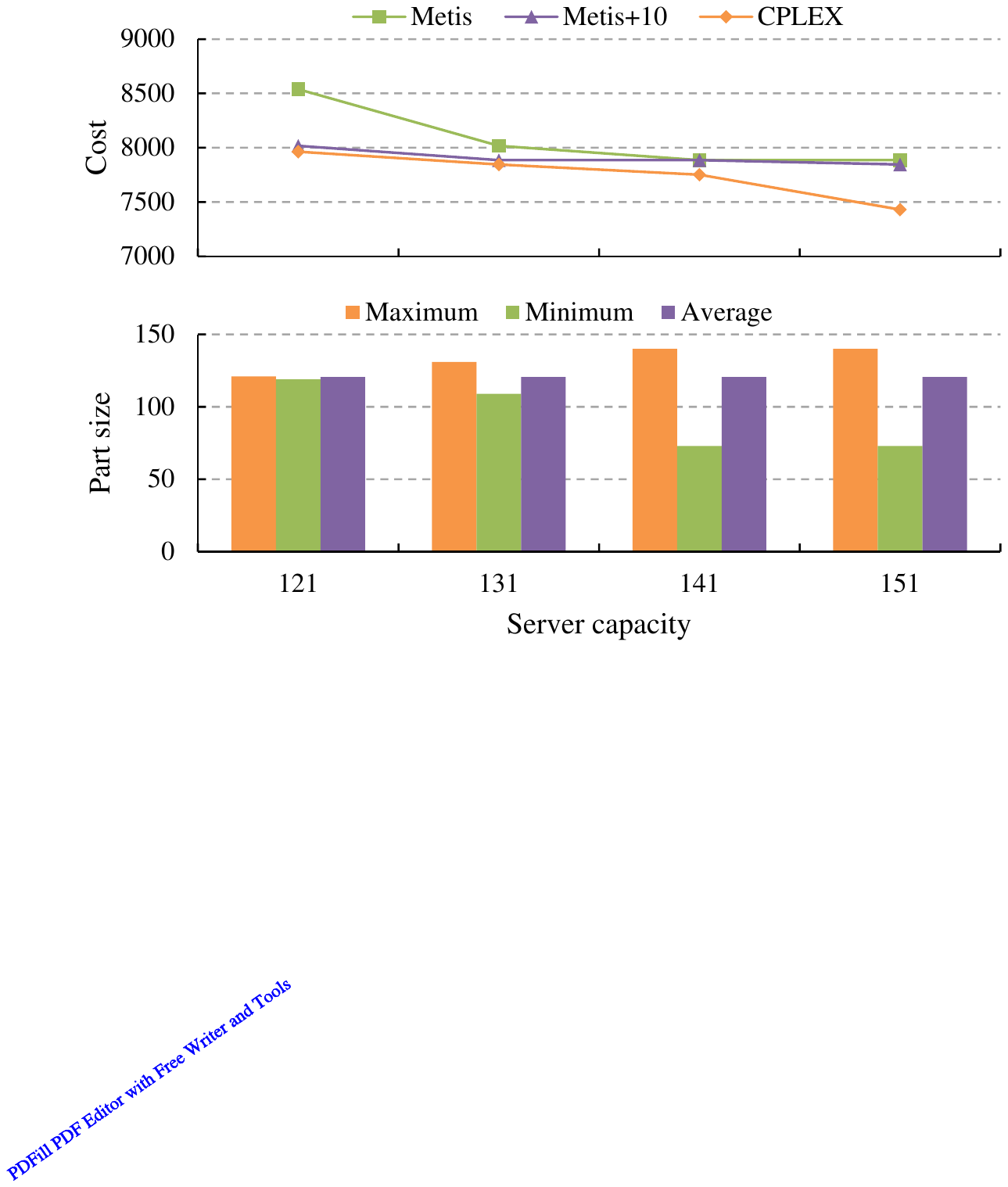}
\caption{Materialized queries. Eight servers. Data communication cost.}
\label{fig.tpc-ds2_views_onemove_8}
\end{figure}

\begin{figure}[t]
\centering
\includegraphics[width=3in]{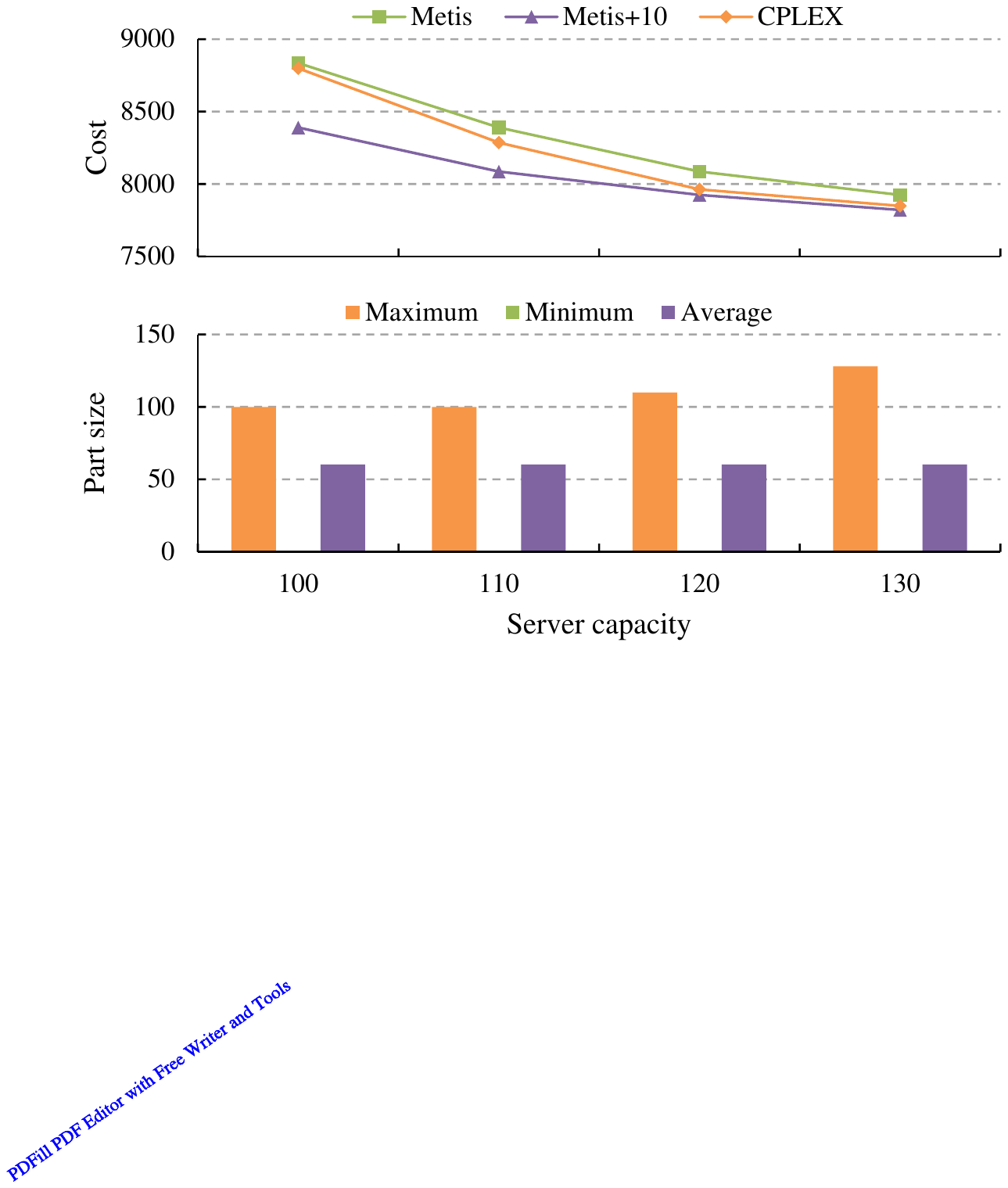}
\caption{Materialized queries. Sixteen servers. Data communication cost.}
\label{fig.tpc-ds2_views_onemove_16}
\end{figure}

For comparison, we also run experiments for generalized data placement in
which we force the materialized views to be
executed and stored on the same server (essentially by setting the appropriate edges to infinity). We wanted
to quantify the impact of allowing the views to move. The results for 8 servers are shown in Figure \ref{fig.onemove-nomove}, with ``-nomove'' representing using the same server to store and compute a view.
It is clear that allowing a view to be computed on one server and stored on another has a noticeable impact on the data communication cost.

\begin{figure}[t]
\centering
\includegraphics[width=3in]{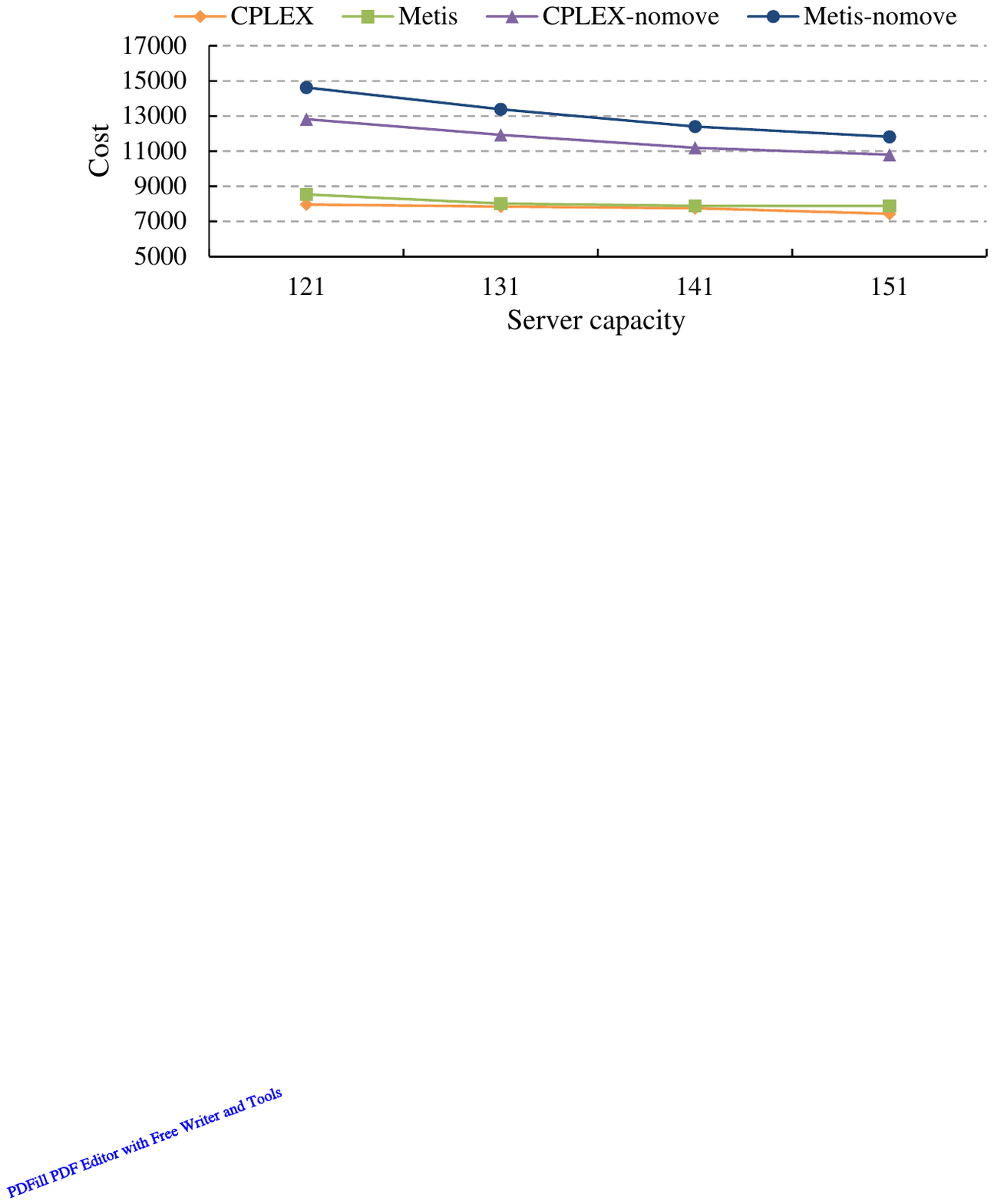}
\caption{Materialized queries. Eight servers. Comparison of data communication cost when allowing views to move vs. not move.}
\label{fig.onemove-nomove}
\end{figure}

\subsection{Load-Balancing Experiments}

\begin{figure}[t]
\centering
\includegraphics[width=3.5in]{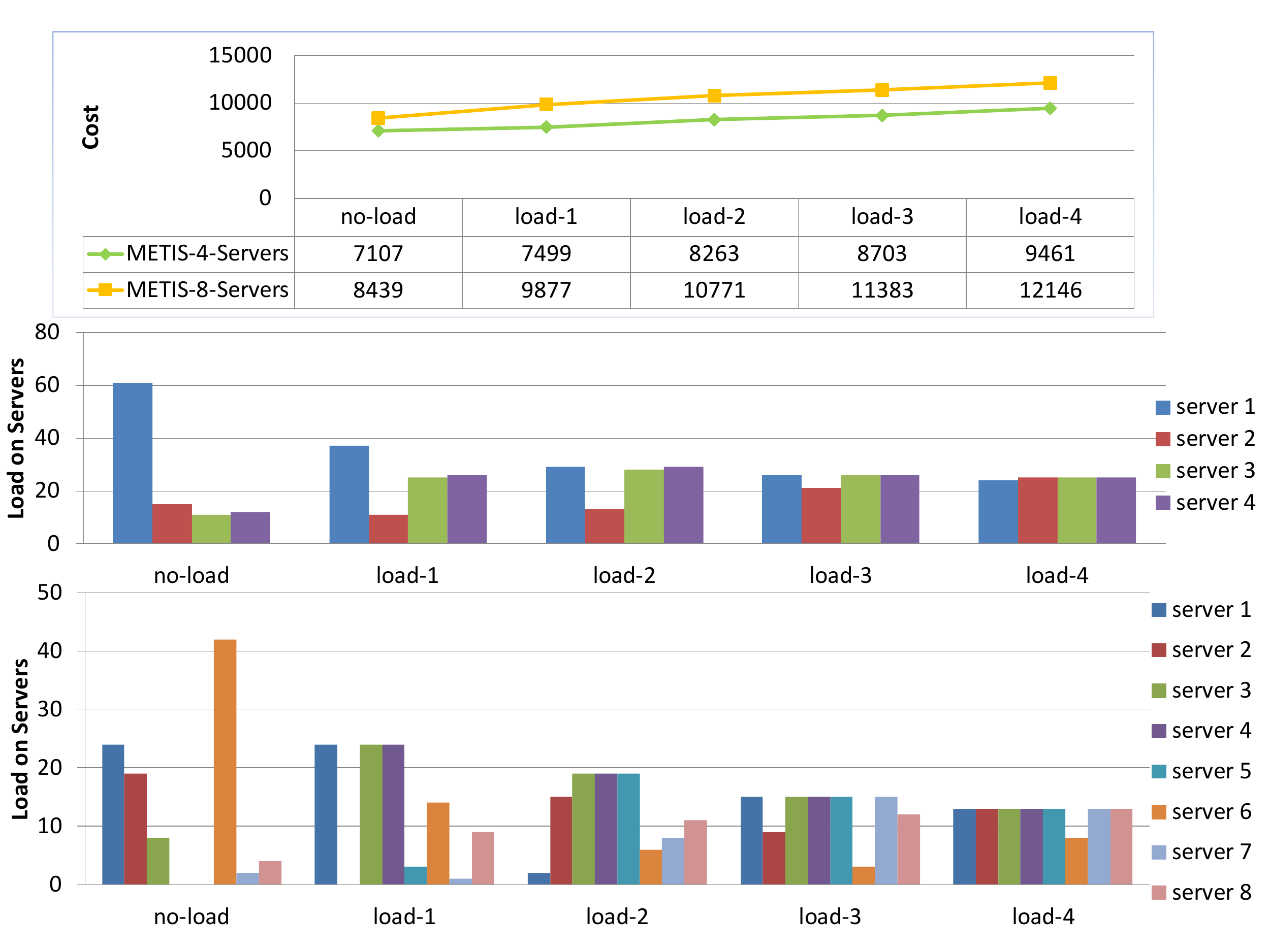}
\caption{No materialized queries. Four and Eight servers respectively. Comparison of data communication cost with distribution of load.}
\label{fig.load}
\end{figure}

We next run a set of experiments to test the effect of adding load-balancing constraints, as proposed in Section~4.3, on the
data communication cost. The bipartite graph construction now contains two-dimensional weights for all vertices and each server has two-dimensional weights: capacity
and load. METIS allows two-dimensional weights on vertices and, using the {\sf ubvec} parameter, it is possible to vary the ratio of minimum-to-maximum
load across the servers. Figure \ref{fig.load} shows a comparison result of using 4 and 8 servers with no load balancing constraints ({\sf no-load}) and four
other scenarios ({\sf load-1} through {\sf load-4})
with increasingly stringent minimum-to-maximum load constraints.
For this experiment, we use the simple TPC-DS workload with base tables and queries only.
The distribution of load across the servers is also shown. The results indicate that with moderate
increase in communication cost it is possible to satisfy reasonable load balancing requirements.
We obtained similar results when all 99 TPC-DS queries were materialized.

\subsection{Replication Experiments}

We also run a set of experiments to test the replication heuristics proposed in Section \ref{sec.replication}.
For these experiments we vary the number of servers from 4 up to 16, and the number of replicas per table
from 1 to 5. Then, we compute the total cost of executing the queries after determining the placement using
Heuristic~1 (abbreviated H1) and Heuristic~2 (abbreviated H2). We use the simple TPC-DS dataset with no materialized views. Figure \ref{fig.replication_cost}
shows the results for 4, 8, and 16 servers. We can clearly see that H2 results in significantly better
data placement than H1 in all cases. Also, notice how the data communication cost drops as the number of
replicas increases, as expected. Observe that as we increase the number of servers, the total communication
cost increases. This is reasonable, given that we do not keep the capacities of the servers fixed (so as not to have
underutilized servers).  Therefore, as the number of servers increases, the total capacity of each server decreases proportionately, and
hence the dataset becomes more distributed, resulting in higher communication cost. Finally, notice that not all
parameter combinations are meaningful, hence there are some missing points in the plots (for example it is pointless to have replication
more than 4 for 4 servers).
Specifically, for H2 certain configurations result in invalid inputs, given that H2 utilizes $1/r$ of the total number of
servers in each round.

Figure \ref{fig.replication_size} shows the maximum part size assigned to the servers as the replication factor increases.
We can see that both heuristics result in fairly equal maximum part size, although H2 is consistently better
than H1. As a baseline, we also plot the ``desired'' part size, which is the ideal storage overhead on each
server, based on the desired replication factor. The desired part size is usually unattainable, since there exist tables
in the dataset with sizes larger than the desired part size.

\begin{figure}[t]
\centering
\includegraphics[width=3in]{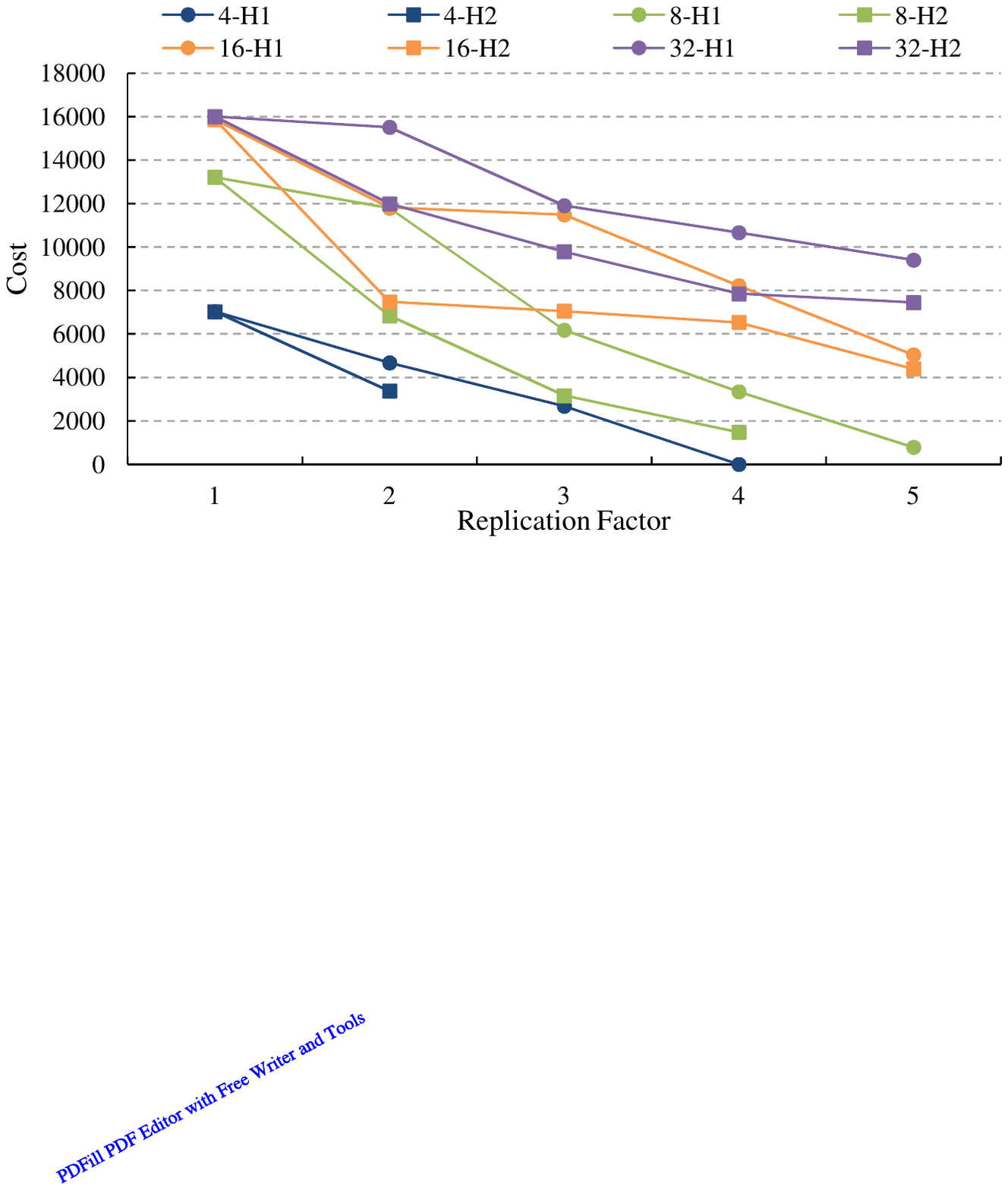}
\caption{No materialized queries. Data communication cost with replication.}
\label{fig.replication_cost}
\end{figure}

\begin{figure}[t]
\centering
\includegraphics[width=3in]{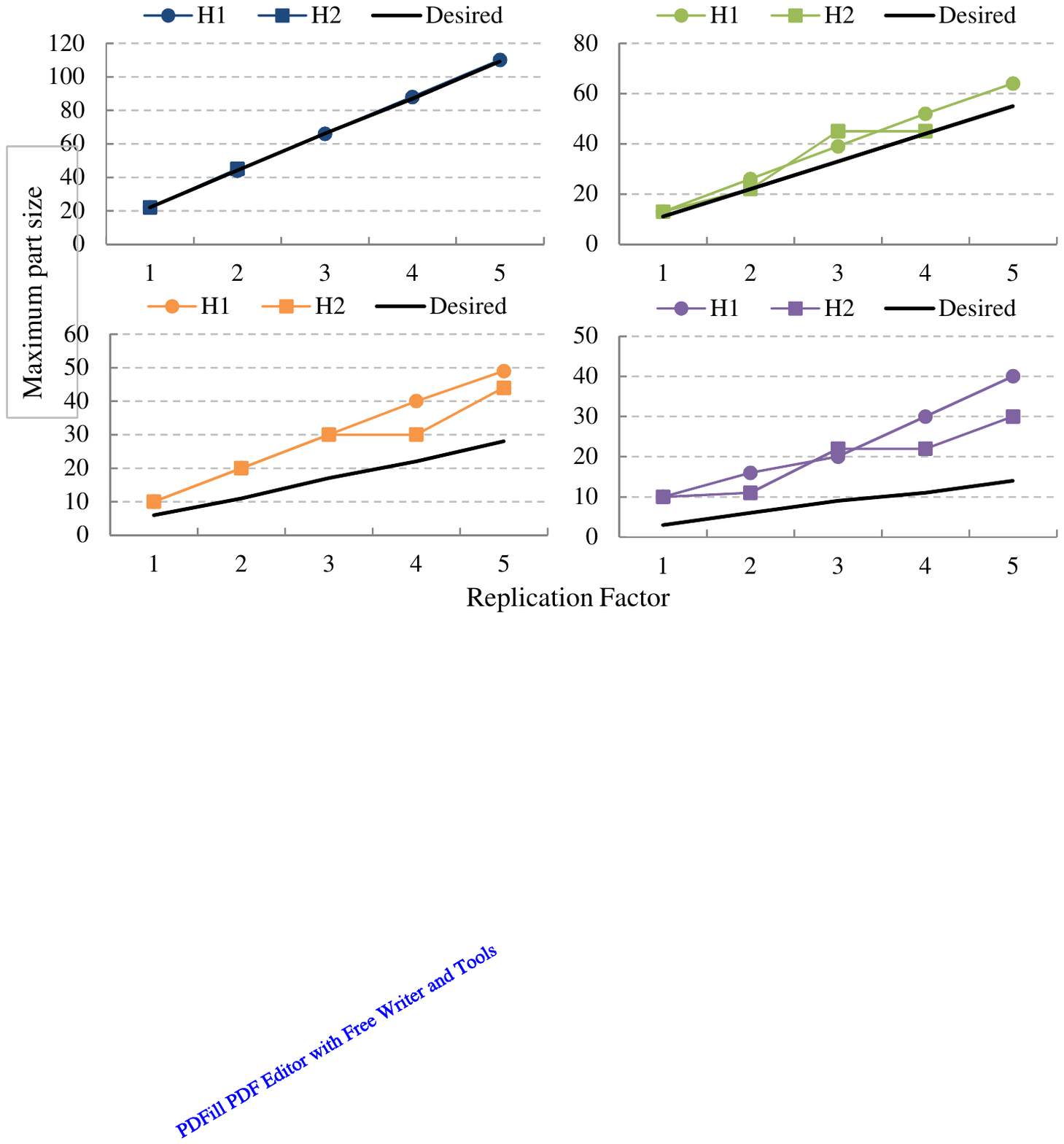}
\caption{No materialized queries. Maximum part size with replication.}
\label{fig.replication_size}
\end{figure}

\subsection{Scalability Experiments}

For our final experiment we test the scalability of METIS using very large randomly-generated workloads.
Figures \ref{fig.rand_queries} and \ref{fig.rand_tables} show the results for a varying number
of queries and tables. In the first plot we keep the number of tables fixed at 1000 and vary
the number of queries from 1000 up to 16000. In the second plot we keep the number of queries
fixed to 8000 and vary the number of tables from 1000 up to 16000. We also vary the number of
servers from 4 up to 128. Notice that the more tables
we have, the sparser the queries become, resulting in sparser bipartite graphs, which are easier
to handle. But the more queries we have with respect to tables the denser the graphs become,
which is expected to result in higher cost. The reported times are in seconds, and we can see that METIS
scales very well across all dimensions.

\begin{figure}[t]
\centering
\includegraphics[width=3in]{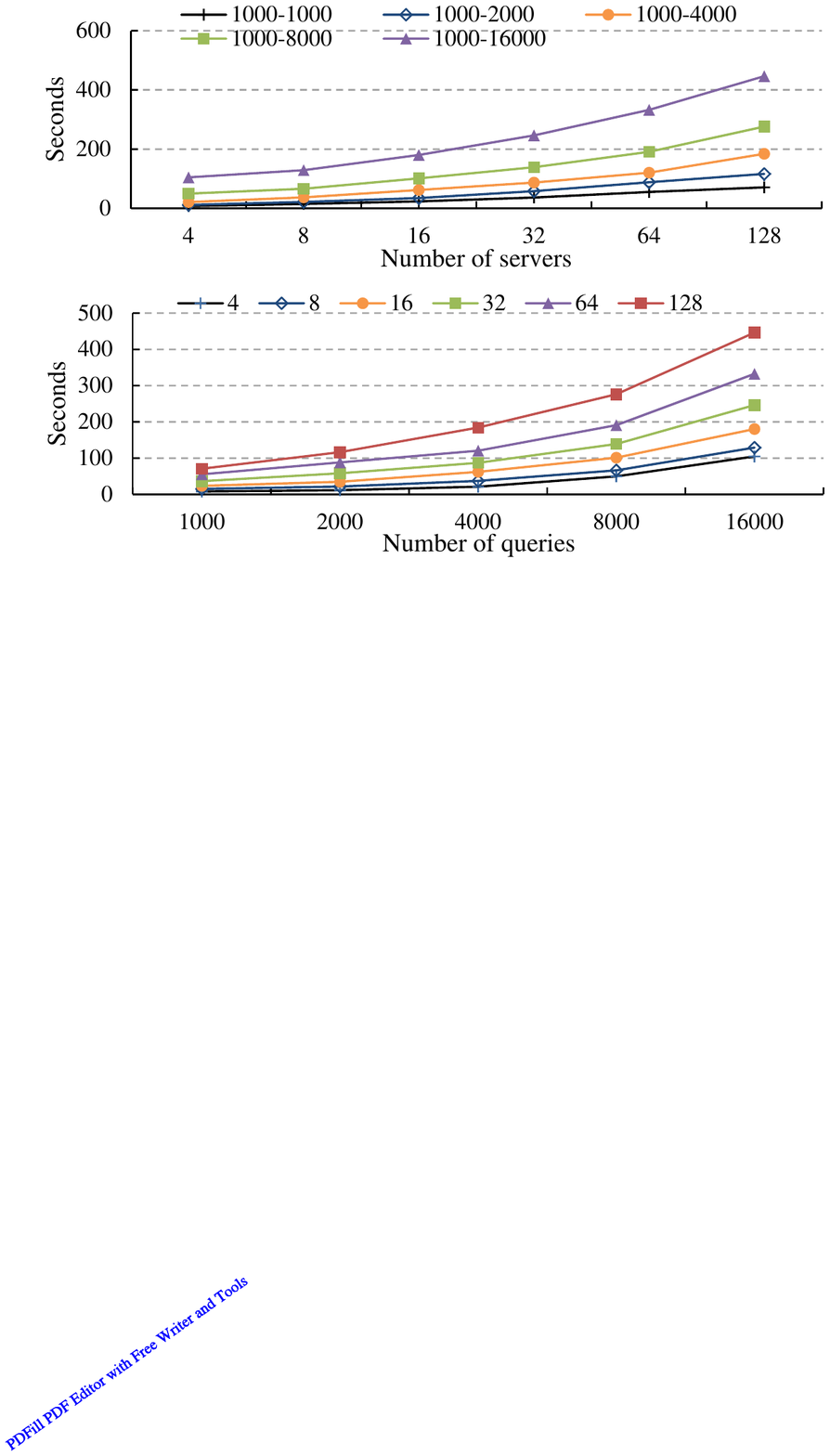}
\caption{Randomly generated datasets. Scalability with varying number of queries.}
\label{fig.rand_queries}
\end{figure}

\begin{figure}[t]
\centering
\includegraphics[width=3in]{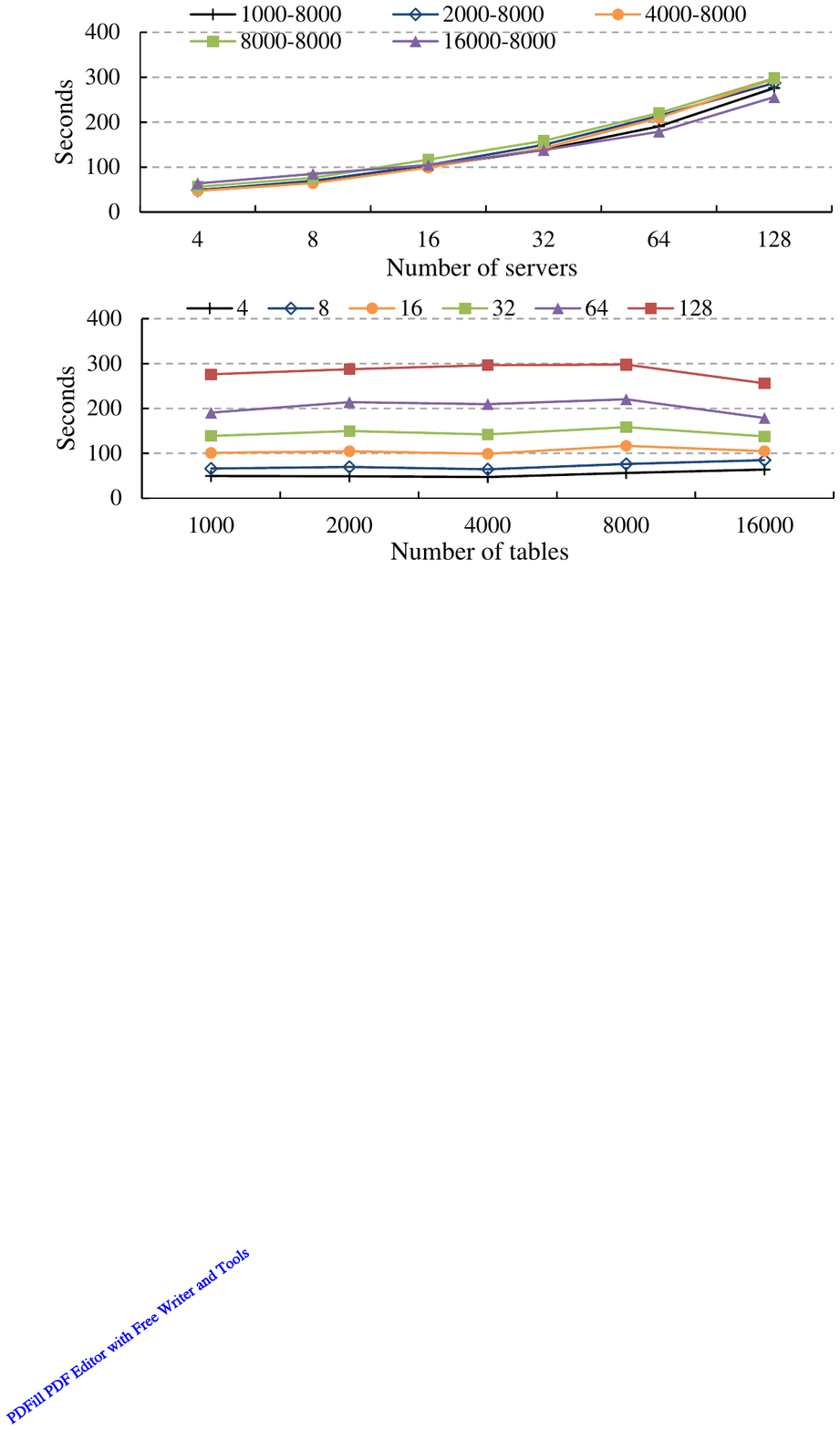}
\caption{Randomly generated datasets. Scalability with varying number of tables.}
\label{fig.rand_tables}
\end{figure}

\subsection{Lessons Learned}

Clearly, relying on an optimal algorithm for determining data placement is not practical even
for very small datasets and number of servers (e.g., 24 tables, 99 materialized queries, and four servers
takes 220 minutes to compute). On the other hand, determining approximate placements using our technique
can be done very quickly and results in very high quality results in most cases, especially as the number
of servers increases. In other words, the reduction of data placement to graph partitioning by means of our
bipartite graph construction allowed us to use standard graph partitioning libraries (METIS)
to solve a computationally expensive problem efficiently and with high quality in practice.

\section{Conclusions} \label{sec.conclusion}
In this paper, we presented \emph{practical} algorithms for minimizing the data communication cost of evaluating a
query workload in a distributed setting.  We reduced the data communication problem to graph partitioning and showed
how to treat arbitrary query execution plans that can involve tables, materialized views, and intermediate results. We also
discussed load balancing, and presented two heuristics for handling replication.

We implicitly assumed that queries and view updates arrive asynchronously.
An interesting direction for future work is to optimize for workloads in which some queries are always executed together, in some pre-specified order.  This complicates the data placement
problem, but may enable solutions with lower data communication costs.  For example, if two views are
maintained on the same server and require the same table to be shipped from some other server, then by
always updating these two views together we only need to ship said table once per update.
Load balancing may also be improved by knowing which queries tend to be executed together.  For instance, if queries $Q_1$ and $Q_2$ are always executed together and each consists of two steps, then we may prefer to assign their first steps to two different servers, rather than allocating the first steps of both queries to the same server and having the servers allocated to the second steps sit idle until the first steps are done.

\section{Appendix}
\subsection{Hardness}
\begin{theorem}
Determining feasibility of {\sc Data Placement} instances is NP-Hard.
\end{theorem}
\proof
We reduce {\sc Partition}, which is known to be NP-Hard, to (the
feasibility version of) {\sc Data Placement}.  {\sc Partition} is this problem:  given a
sequence $\langle a_1,a_2,\ldots,a_n\rangle$ of positive integers,
is there a subset $S\subseteq \{1,2,\ldots,n\}$ such that
$\sum_{i\in S} a_i=\sum_{i\not \in S} a_i$ ($=(1/2)\sum_i a_i$)?
Given an instance $\langle a_1,a_2,\ldots,a_n\rangle$ of positive
integers, let $Y=\sum_i a_i$.  If $Y$ is odd, then the answer to
{\sc Partition} is ``no.''  If $Y$ is even, construct an
instance of {\sc Data Placement} in which there are two servers,
each of capacity $Y/2$.  There are $n$ tables, the $i$th of
which has size $a_i$, and there are no queries.  The key point
is that there is a legal placement of the $n$ tables onto the
two servers, each of capacity $Y/2$, if and only if the answer
for {\sc Partition} is ``yes.'' \qed

\subsection{Integer Program (IP) For the Case of Replication}
The IP assumes that we want $r \ge 1$ replicas of each
table (though generalizing to the case of table-dependent
replica counts is easy).
The $r$ replicas of table $T_j$ will be denoted by superscript $h=1,2,...,r$.
The IP for this case has binary variables $x^h_{T_j,S_k}$ for table
$T_j$ and server $S_k$.  The variable $x^h_{T_j,S_k}$ will be 1 if and only if the
$h$th replica of table $T_j$ is stored on server $S_k$.
In addition, for query $Q_i$ and server $S_k$, there is a binary variable
$y_{Q_i,S_k}$, which is 1 if and only if $Q_i$ is stored on
server $S_k$.
Finally, we also have real (not binary) variables
$z^h_{Q_i,T_j,S_k}$ for query $Q_i$, table
$T_j$ in $Q_i$, and server $S_k$, where $z^h_{Q_i,T_j,S_k}$ will
be 1 if the $h$th replica of table $T_j$ is stored on server
$S_k$ and query $Q_i$ is also stored on server $S_k$,
and 0 otherwise.  In other words,
$z^h_{Q_i,T_j,S_k}$ will equal $y_{Q_i,S_k}\cdot x^h_{T_j,S_k}$.  As product is
decidedly nonlinear, we will have to show how to achieve
$z^h_{Q_i,T_j,S_k}=y_{Q_i,S_k}\cdot x^h_{T_j,S_k}$, since we cannot write such a
constraint explicitly.

Recall that the objective function is to minimize
the sum, over $Q_i$, of the sum of the sizes of all tables in
$Q_i$ minus the sum of the sizes of the tables of
$Q_i$ at least one replica of which is stored on the same server
as $Q_i$.
Hence, we can view the objective as the {\em maximization} of [the sum, over
$Q_i$, of the sizes of the items of $Q_i$ at least one replica of
which is stored on the same server as $Q_i$].
The latter quantity equals
$$=\sum_{k} \sum_{i} y_{Q_i,S_k}\left [ \sum_{j: T_j \in Q_i} C^{T_j}_{Q_i} \left (
\sum_h x^h_{{T_j},S_k}\right ) \right ].$$
(We ensure below, for all $T_j$ and $S_k$, that $\sum_h
x^h_{T_j,S_k}\le 1$,
so that the $r$ replicas of a given table go on
different servers.)
Hence the objective function is

$$\max \sum_{k} \sum_{i} y_{Q_i,S_k} \left [
\sum_{j: T_j\in Q_i} C^{T_j}_{Q_i} \left ( \sum_h x^h_{{T_j}S_k}\right ) \right ].$$

This objective is, unfortunately, quadratic.  We will fix this shortly.
The constraints are:

\begin{enumerate}
\item
For all $Q_i$, $\sum_{k} y_{Q_i,{S_k}}=1$, i.e., every query gets assigned
to exactly one server.
\item
For all ${T_j},{S_k}$, $\sum_h x^h_{{T_j},{S_k}}\le 1$, i.e., at most one replica of
each item goes on a given server.
\item
For all ${T_j},h$, $\sum_{k} x^h_{T_j,{S_k}}=1$, i.e., every replica gets
assigned to some server.
\item
For all ${S_k}$, $\sum_{j} \sum_h x^h_{{T_j},{S_k}} t_j \le s_k$,
i.e., the capacities of the servers are not violated.
\end{enumerate}

We deal with the nonlinearity of the objective function as
follows.  In
$$\max \sum_{k} \sum_{i}
\sum_{j:T_j\in Q_i} C^{T_j}_{Q_i} \left ( \sum_h
x^h_{T_j,{S_k}}\cdot y_{Q_i,{S_k}}\right ) ,$$
we replace ``$x^h_{T_j,{S_k}}\cdot y_{Q_i,{S_k}}$'' by
``$z^h_{Q_i,T_j,{S_k}}$'',
getting $$\max \sum_{k} \sum_{i}
\sum_{j: T_j\in Q_i} C^{T_j}_{Q_i}\left ( \sum_h
z^h_{Q_i,T_j,{S_k}}\right ) .$$
The intent of $z^h_{Q_i,T_j,{S_k}}$ is that it should, at
optimality, equal $x^h_{T_j,{S_k}} \cdot y_{Q_i,{S_k}}$, but how do we ensure
that? The answer is that we add constraints
$z^h_{Q_i,T_j,{S_k}} \le y_{Q_i,{S_k}}$ and  $z^h_{Q_i,T_j,{S_k}} \le
x^h_{T_j,{S_k}}$,
for all $h,Q_i,T_j,{S_k}$.  Because the objective function is a
maximization, and $z^h_{Q_i,T_j,{S_k}}$ appears nowhere else, in an
optimal solution $z^h_{Q_i,T_j,{S_k}}$ will be as large as possible, and
hence at optimality
$z^h_{Q_i,T_j,{S_k}}=\min\{y_{Q_i,{S_k}},x^h_{T_j,{S_k}}\}=
y_{Q_i,{S_k}} \cdot x^h_{T_j,{S_k}}$.



\begin{thebibliography}{10}

\bibitem{qfs}
{QFS}.
\newblock https://www.quantcast.com/engineering/qfs.

\bibitem{Riak}
{Riak}.
\newblock \url{http://basho.com/riak}.

\bibitem{swift}
{Swift}.
\newblock http://docs.openstack.org/developer/swift.

\bibitem{tahoe}
{Tahoe-LAFS}.
\newblock https://tahoe-lafs.org/trac/tahoe-lafs.

\bibitem{sanjay04integrating}
S.~Agrawal, V.~Narasayya, and B.~Yang.
\newblock Integrating vertical and horizontal partitioning into automated
  physical database design.
\newblock In {\em SIGMOD}, pages 359--370, 2004.

\bibitem{borthakur2008hdfs}
D.~Borthakur.
\newblock The hadoop distributed file system: Architecture and design.
\newblock \url{http://hadoop.apache.org/docs/r0.18.0/hdfs_design.pdf}, 2007.

\bibitem{chang06bigtable}
F.~Chang, J.~Dean, S.~Ghemawat, W.~C. Hsieh, D.~A. Wallach, M.~Burrows,
  T.~Chandra, A.~Fikes, and R.~E. Gruber.
\newblock Bigtable: a distributed storage system for structured data.
\newblock In {\em OSDI}, pages 15--15, 2006.

\bibitem{chaudhuri98overview}
S.~Chaudhuri.
\newblock An overview of query optimization in relational systems.
\newblock In {\em PODS}, pages 34--43, 1998.

\bibitem{curino10schism}
C.~Curino, E.~Jones, Y.~Zhang, and S.~Madden.
\newblock Schism: a workload-driven approach to database replication and
  partitioning.
\newblock {\em VLDBJ}, 3(1-2):48--57, Sept. 2010.

\bibitem{dynamo}
G.~DeCandia, D.~Hastorun, M.~Jampani, G.~Kakulapati, A.~Lakshman, A.~Pilchin,
  S.~Sivasubramanian, P.~Vosshall, and W.~Vogels.
\newblock Dynamo: amazon's highly available key-value store.
\newblock In {\em SOSP}, pages 205--220, 2007.

\bibitem{eltabakh11cohadoop}
M.~Y. Eltabakh, Y.~Tian, F.~\"{O}zcan, R.~Gemulla, A.~Krettek, and
  J.~McPherson.
\newblock Cohadoop: flexible data placement and its exploitation in hadoop.
\newblock {\em PVLDB}, 4(9):575--585, June 2011.

\bibitem{ghandeharizadeh90hubrid}
S.~Ghandeharizadeh and D.~J. DeWitt.
\newblock Hybrid-range partitioning strategy: A new declustering strategy for
  multiprocessor database machines.
\newblock In {\em VLDB}, pages 481--492, 1990.

\bibitem{golab09stream}
L.~Golab, T.~Johnson, S.~J. Seidel, and V.~Shkapenyuk.
\newblock Stream warehousing with datadepot.
\newblock In {\em SIGMOD}, pages 847--854, 2009.

\bibitem{HendricksonK00}
B.~Hendrickson and T.~G. Kolda.
\newblock Graph partitioning models for parallel computing.
\newblock {\em Parallel Computing}, 26(12):1519--1534, 2000.

\bibitem{hendrickson94chaco}
B.~Hendrickson and R.~Leland.
\newblock Chaco: Software for partitioning graphs.
\newblock \url{http://www.sandia.gov/~bahendr/chaco.html}, 1994.

\bibitem{karypis95metis}
G.~Karypis and V.~Kumar.
\newblock Metis - unstructured graph partitioning and sparse matrix ordering
  system, version 2.0.
\newblock \url{http://glaros.dtc.umn.edu/gkhome/metis/metis/overview}, 1995.

\bibitem{kayyoor13data}
A.~K. Kayyoor, A.~Deshpande, and S.~Khuller.
\newblock Data placement and replica selection for improving co-location in
  distributed environments.
\newblock {\em CoRR}, abs/1302.4168, 2013.

\bibitem{liu96partitioning}
D.-R. Liu and S.~Shekhar.
\newblock Partitioning similarity graphs: A framework for declustering
  problems.
\newblock {\em ISJ}, 21:475--496, 1996.

\bibitem{navathe89vertical}
S.~B. Navathe and M.~Ra.
\newblock Vertical partitioning for database design: a graphical algorithm.
\newblock {\em SIGMOD Record}, 18(2):440--450, June 1989.

\bibitem{NehmeB11}
R.~V. Nehme and N.~Bruno.
\newblock Automated partitioning design in parallel database systems.
\newblock In {\em SIGMOD}, pages 1137--1148, 2011.

\bibitem{rao02automating}
J.~Rao, C.~Zhang, N.~Megiddo, and G.~Lohman.
\newblock Automating physical database design in a parallel database.
\newblock In {\em SIGMOD}, pages 558--569, 2002.

\bibitem{tpc-ds}
T.~P. P.~C. ({TPC}).
\newblock Tpc-ds.
\newblock \url{http://www.tpc.org/tpcds/}.

\bibitem{zilio98physical}
D.~C. Zilio.
\newblock {\em Physical database design decision algorithms and concurrent
  reorganization for parallel database systems}.
\newblock PhD thesis, University of Toronto, 1998.

\end{thebibliography}
\end{document}